\newtheorem{theorem}{Theorem}[section]
\newtheorem{lemma}{Lemma}[section]
\newtheorem{remark}{Remark}[section]
\newcommand{\bbr}{\mathbb R}
\def\charf {\mbox{{\text 1}\kern-.24em {\text l}}}
\newenvironment{proof}{\noindent {\it Proof.}}{\hfill$\square$\bigskip}
\begin{document}

\title{The Einstein-Boltzmann system and positivity}
\renewcommand{\baselinestretch}{1}
\author{Ho Lee and Alan D. Rendall\\
$\;$\\
Max-Planck-Intitut f\"{u}r Gravitationsphysik, Albert-Einstein-Institut,\\
Am M\"{u}hlenberg 1, 14476 Potsdam, Germany\\
\texttt{E-mail:ho.lee@aei.mpg.de, alan.rendall@aei.mpg.de}\\
}
\date{}

\maketitle

\begin{abstract}
The Einstein-Boltzmann system is studied, with particular attention to the 
non-negativity of the solution of the Boltzmann equation. A new 
parametrization of post-collisional momenta in general relativity
is introduced and then used to simplify the conditions on the collision
cross-section given by Bancel and Choquet-Bruhat in \cite{BCB73}. The
non-negativity of solutions of the Boltzmann equation on a given curved
spacetime has been studied by Bichteler \cite{B67} and Tadmon \cite{T10}. 
By examining to what extent the results of these authors apply in the 
framework of Bancel and Choquet-Bruhat, the non-negativity problem for the 
Einstein-Boltzmann system is resolved for a certain class of scattering
kernels. It is emphasized that it is a challenge to extend the existing
theory of the Cauchy problem for the Einstein-Boltzmann system so as to
include scattering kernels which are physically well-motivated.
\end{abstract}

\tableofcontents

\section{Introduction}
\setcounter{equation}{0}
To understand how the gravitational field and matter interact in general 
relativity it is necessary to specify a suitable matter model. One 
frequently used type of model comes from kinetic theory, where matter is 
considered as a collection of particles described statistically. The Vlasov 
and Boltzmann equations are the equations describing this type of model
and coupling to the gravitational field gives the Einstein-Vlasov and the 
Einstein-Boltzmann systems respectively. We refer to \cite{A11} 
and \cite{R} and their references for basic information about the 
Einstein-Vlasov system and the Einstein equations coupled to many different 
matter models. The present paper considers only the Einstein-Boltzmann system. 
Compared to the Einstein-Vlasov system, the Einstein-Boltzmann (EB) system
has not been studied much in the decades since local existence was proven
by Bancel and Choquet-Bruhat \cite{B73,BCB73}. Recently several further 
results have been obtained for the EB system. Mucha \cite{M981,M982} showed 
that the initial regularity conditions given in \cite{B73,BCB73} can be 
weakened. Noutchegueme and Dongo \cite{ND06} and Noutchegueme and 
Takou \cite{NT06} proved some global existence results for certain
classes of spacetimes with symmetries. The relativistic Boltzmann equation on 
a given curved spacetime was studied in \cite{B67,NDT05,T10}. As for the 
nonrelativistic or special relativistic Boltzmann equations, one can find 
plenty of references, and we only refer to \cite{CIP,dvv,G}.

The EB system is a system of equations where the Einstein equations and
the Boltzmann equation are coupled to each other. In harmonic coordinates, the 
Einstein equations take the form of the system of quasilinear wave equations 
\begin{equation}\label{1.1}
-\frac{1}{2}g^{\gamma\delta}\frac{\partial^2g^{\alpha\beta}}
{\partial x^\gamma\partial x^\delta}
+H^{\alpha\beta}=T^{\alpha\beta}-\frac12 (g^{\gamma\delta}T_{\gamma\delta})g^{\alpha\beta},
\end{equation}
for a Lorentzian metric $g_{\alpha\beta}$, where $H^{\alpha\beta}$ is a rational 
function of $g_{\alpha\beta}$ and 
$\partial_\gamma g_{\alpha\beta}$ with denominator a power of the determinant 
$|g|$, and $T^{\alpha\beta}$ is the stress-energy tensor of matter described by 
the Boltzmann equation. The Boltzmann equation describes the dynamics of a 
relativistic gas, the particles of which interact through binary collisions.
The unknown is the distribution function $f$ defined on the phase space,
which is the tangent bundle $T(M)$ of a four-dimensional manifold $M$
with a metric $g_{\alpha\beta}$. If we assume that all the particles considered 
have the same mass, then momentum is confined to a submanifold 
$P(M)\subset T(M)$, which is defined by $g_{\alpha\beta}p^\alpha p^\beta=-1$, when 
the mass is normalized to unity. This submanifold is called the mass 
hyperboloid. As a consequence of this assumption the distribution function 
turns out to be a function of seven variables. The Boltzmann equation takes 
the form
\begin{equation}\label{1.2}
p^\alpha\frac{\partial f}{\partial x^\alpha}-\Gamma^i_{\alpha\beta}p^\alpha p^\beta
\frac{\partial f}{\partial p^i}=Q(f,f),
\end{equation}
where $\Gamma^i_{\alpha\beta}$ are the Christoffel symbols of the metric 
$g_{\alpha\beta}$, and the collision operator $Q$,
\begin{equation}\label{1.3}
Q(f,f)=\int_{\bbr^3}\int_{S^2}S(x,p,q,\Omega)
\Big(f(p')f(q')-f(p)f(q)\Big)\,d\Omega\,\frac{|g|^{\frac{1}{2}}}{-q_0}\,dq,
\end{equation}
will be studied in detail in the next section.
Here, for simplicity we abbreviate $f(x,p')$ as $f(p')$, $f(x,q')$ as $f(q')$, and so on.
The stress-energy tensor $T^{\alpha\beta}$ is defined as
\begin{equation}\label{1.4}
T^{\alpha\beta}=\int_{\bbr^3}f(x,p)p^\alpha p^\beta\frac{|g|^{\frac{1}{2}}}{-p_0}\,dp,
\end{equation}
where the integration over $\bbr^3$ is understood as integration on the mass 
hyperboloid at $x$.
The equations \eqref{1.1}--\eqref{1.4} form the EB system \cite{Eh,St}.

In this paper, we are interested in the strong solutions of the EB system
constructed in \cite{BCB73} by Bancel and Choquet-Bruhat. In Section 2, the 
Boltzmann equation in a curved spacetime is studied and an alternative 
formulation using an orthonormal frame is explained. (Its relevance is
discussed in Section 4.) Returning to the equations written in a coordinate 
frame as in \cite{BCB73} we introduce a new parametrization of the 
post-collisional momenta. Unlike in the nonrelativistic case, the 
post-collisional momenta can usefully be parametrized in many different ways 
in the relativistic case \cite{GS91,S102,S11}. The parametrization in Section 
2.1 can be thought of as a generalization of that in \cite{GS91}. With this 
parametrization, we consider the original result of Bancel and Choquet-Bruhat, 
and simplify some conditions in their paper. The distribution function in the 
Boltzmann equation should be non-negative because of its interpretation as a 
number density. In mathematical terms this means that a solution $f$ of the 
Boltzmann equation arising from non-negative initial data should itself be 
non-negative. This important property was not proved in  \cite{BCB73}. A 
central aim of the present paper is to clarify this question, taking into the 
account the available results in the literature. This non-negativity problem 
for the EB system is considered in Section 3. In the case of the Boltzmann 
equation on a given curved spacetime, there are two results known 
\cite{B67,T10} concerning the non-negativity problem. If a solution of the EB 
system is given, in particular a four-dimensional manifold and a metric as a 
part of a solution, a result on non-negativity of solutions of the Boltzmann 
equation on suitable curved spacetimes implies a non-negativity result for the 
EB system. Thus in order to settle this question for the solutions obtained by
Bancel and Choquet-Bruhat in \cite{BCB73} it suffices to examine whether the 
conditions assumed in \cite{B67,T10} hold in the framework of \cite{BCB73}.

In Section 3.1, a result of Tadmon \cite{T10} is considered. Although 
\cite{T10} cites \cite{BCB73} it does not discuss the applicability of the 
results obtained to the solutions of that paper. The conditions in \cite{T10} 
are quite complicated, but it is shown in Section 3.1 that they are satisfied 
in the framework of Bancel and Choquet-Bruhat. In Section 3.2 it is pointed
out that the assumptions assuring non-negativity of solutions of the Boltzmann 
equation in the work of Bichteler \cite{B67} do not follow from those of 
\cite{BCB73}. It is shown, however, that non-negativity of the distribution 
function in the solutions of \cite{BCB73} can be proved using a method simpler 
than that of Tadmon.

Section 4 contains some further general considerations on the EB system. 
The topics discussed include the advantages of a formulation using 
orthonormal frames, the issue of identifying physically relevant
scattering kernels and how the non-negativity result can be made global.

\subsection{Notation}
In this part, we collect the notation which is used in this paper.
Greek indices run from $0$ to $3$, while Latin indices run from $1$ to $3$.
The spatial variable $x$ denotes a four-vector, while the momentum variable
$p$ denotes a three-dimensional vector, i.e.,
\[
x=(x^0,x^1,x^2,x^3),\quad p=(p^1,p^2,p^3).
\]
The metric $g_{\alpha\beta}$ has signature $(-,+,+,+)$, $g$ denotes its 
determinant, and the Minkowski metric is denoted by $\eta_{\alpha\beta}$.
Throughout the paper, the speed of light $c$ and mass of the particles are 
assumed to be unity and hence the momentum $p^\alpha$ lies on a hypersurface 
defined by the equation $g_{\alpha\beta}p^\alpha p^\beta=-1$, which is called the 
mass shell. Due to the mass shell condition, $p^0$ and $p_0$
are represented by $p^i$ as follows.
\begin{equation}\label{2.6}
\begin{aligned}
p^0&=\frac{1}{-g_{00}}\left(g_{0i}p^i+\sqrt{(g_{0i}p^i)^2
-g_{00}(g_{ij}p^ip^j+1)}\right),\\
p_0&=-\sqrt{(g_{0i}p^i)^2-g_{00}(g_{ij}p^ip^j+1)}.
\end{aligned}
\end{equation}
In some places, we use the Euclidean norm in $\bbr^d$, i.e.,
\[
|p|^2=\sum_{i=1}^d(p^i)^2\quad\mbox{{\rm for}}\quad p\in\bbr^d,
\]
and $\cdot$ will denote the usual inner product in $\bbr^d$, i.e.,
\[
p\cdot q=\sum_{i=1}^dp^iq^i\quad\mbox{{\rm for}}\quad p,q\in\bbr^d.
\]

We also collect the function spaces which are used in this paper and which were 
originally introduced in \cite{BCB73}. The spaces are basically Sobolev spaces 
weighted in the momentum variables. Two kinds of weight functions are used,
$(p^0)^{N/2+|\hat{k}|}$ and $e^{p^0}$, where $N$ is a positive integer and 
$|\hat{k}|$ is the number of derivatives which are taken with respect to 
momentum variables. For simplicity, we write $h_k(p)=(p^0)^{N/2+|\hat{k}|}$ and 
$h(p)=e^{p^0}$, which can be understood as the case $N=\infty$.
\begin{enumerate}
\item Let $\omega_0$ be a domain of $\bbr^3$, i.e., $x^0=0$ in $\bbr^4$,
and $\hat{\omega}_0=\omega_0\times\bbr^3$.
\item Let $H_\mu(\omega_0)$ and $H_{\mu,N}(\hat{\omega}_0)$ be the Sobolev 
spaces whose norms are defined as
\[
\|u\|^2_{H_\mu(\omega_0)}=\sum_{|k|\leq\mu}\|D^k_xu\|^2_{L^2(\omega_0)},
\]
\[
\|f\|^2_{H_{\mu,N}(\hat{\omega}_0)}=\sum_{|k|\leq\mu}
\|h_k(p)D^k_{x,p}f\|^2_{L^2(\hat{\omega}_0)},
\]
where $k$ is a multi-index such that $D^k_{x,p}=D^{\bar{k}}_xD^{\hat{k}}_p$ in the 
latter case.
\item Let $\Omega$ be a domain of $\bbr^4$
and $\hat{\Omega}=\Omega\times\bbr^3$.
\item Let $H_\mu(\Omega)$ and $H_{\mu,N}(\hat{\Omega})$ be the Sobolev spaces 
whose norms are defined as
\[
\|u\|^2_{H_\mu(\Omega)}=\sum_{|k|\leq\mu}\|D^k_x u\|^2_{L^2(\Omega)},
\]
\[
\|f\|^2_{H_{\mu,N}(\hat{\Omega})}
=\sum_{|k|\leq\mu}\|h_k(p)D^k_{x,p} f\|^2_{L^2(\hat{\Omega})},
\]
where $k$ is a multi-index such that $D^k_{x,p}=D^{\bar{k}}_xD^{\hat{k}}_p$ in the 
latter case.
\item $g_{\alpha\beta}(0)$, $\partial_0 g_{\alpha\beta}(0)$,
and $f(0)$ will denote the initial data for $g_{\alpha\beta}$ and $f$ 
respectively.
\end{enumerate}

\section{The Boltzmann equation in a curved spacetime}
\setcounter{equation}{0}
In this section, we focus on the Boltzmann part of the EB system.
We assume that a four-dimensional manifold and a metric are given,
and consider the Boltzmann equation in this spacetime. Let $x^\alpha\in\bbr^4$, 
$p^\alpha\in\bbr^4$,
and $p^\alpha$ satisfy $g_{\alpha\beta}p^\alpha p^\beta=-1$.
For a given metric $g_{\alpha\beta}$, the
Boltzmann equation reads 
\[
p^\alpha\frac{\partial f}{\partial x^\alpha}-\Gamma^i_{\alpha\beta}p^\alpha p^\beta
\frac{\partial f}{\partial p^i}=Q(f,f),
\]
\[
Q(f,f)(x,p)=\int_{\bbr^3}\int_{S^2}S(x,p,q,\Omega)
\Big(f(p')f(q')-f(p)f(q)\Big)\,d\Omega\,\frac{|g|^{\frac{1}{2}}}{-q_0}\,dq.
\]
Here $\Gamma^i_{\alpha\beta}$ are the Christoffel symbols defined by
\[
\Gamma^\alpha_{\beta\gamma}=\frac{1}{2}g^{\alpha\delta}
(\partial_\gamma g_{\delta\beta}+\partial_\beta g_{\delta\gamma}-\partial_\delta 
g_{\beta\gamma}),
\]
and $S$ is called the collision cross-section and is a non-negative function,
\begin{equation}\label{2.1}
S=\lambda\varrho\sigma\quad\mbox{{\rm with}}\quad \sigma=\sigma(\varrho,\theta).
\end{equation}
$\Omega$ is a point of $S^2$, thought of as the unit sphere in $\bbr^3$
with standard measure $d\Omega$ and polar angle $\theta$. The quantities 
$\sigma$ and $\theta$ are called the scattering kernel and the scattering 
angle respectively. The scalar quantities $\lambda$ and $\varrho$ are defined by
\begin{equation}\label{2.2}
\lambda=\sqrt{-(p_\alpha+q_\alpha)(p^\alpha+q^\alpha)},\quad
\varrho=\sqrt{(p_\alpha-q_\alpha)(p^\alpha-q^\alpha)}.
\end{equation}
The variables $p'^\alpha$, $q'^\alpha$, $p^\alpha$, and $q^\alpha$ denote the
momenta of two colliding particles, i.e., post-collisional momenta $p'^\alpha$ 
and $q'^\alpha$, and pre-collisional momenta $p^\alpha$ and $q^\alpha$, and they 
satisfy
\[
p'^\alpha+q'^\alpha=p^\alpha+q^\alpha,
\]
which expresses energy-momentum conservation. The collision operator $Q$ can 
be written as $Q=Q_+-Q_-$ where
\begin{align*}
Q_+&=\iint S(x,p,q,\Omega)f(p')f(q')\,d\Omega\frac{|g|^{\frac{1}{2}}}{-q_0}\,dq,\\
Q_-&=f(p)\iint S(x,p,q,\Omega)f(q)\,d\Omega\frac{|g|^{\frac{1}{2}}}{-q_0}\,dq.
\end{align*}
$Q_+$ and $Q_-$ are called the gain term and the loss term respectively.
Note that $Q_+$ and $Q_-$ are non-negative.

In the case $g_{\alpha\beta}=\eta_{\alpha\beta}$, the above Boltzmann equation
reduces to the well-known special relativistic Boltzmann equation.
In this case, the Christoffel symbols vanish, $|g|=1$, and $-q_0=q^0$,
so we obtain
\[
\frac{p^\alpha}{p^0}\frac{\partial f}{\partial x^\alpha}
=\iint v_\phi \sigma(\varrho,\theta)
\Big(f(p')f(q')-f(p)f(q)\Big)\,d\Omega\,dq,
\]
where $v_\phi$ is called the M\o{}ller velocity,
\[
v_\phi=\frac{\lambda\varrho}{p^0q^0}.
\]

The general relativistic Einstein-Boltzmann system can be written in an
alternative way which makes it look more similar to the special relativistic
case and which is more convenient for some purposes. The idea is to 
introduce an orthonormal frame $e^\alpha_\mu$ on spacetime with dual coframe 
$\theta^\mu_\alpha$. Here the indices from the beginning of the Greek alphabet
like $\alpha$ are coordinate indices as before whereas the indices like $\mu$ 
from the middle of the Greek alphabet label the vectors of the frame. In 
formulating the EB system the metric is described by its components in a 
coordinate frame as before but the coordinates $(x^\alpha,p^a)$ used to 
parametrize the mass shell are replaced by coordinates $(x^\alpha,v^u)$ where 
$v^\nu=\theta^\nu_\alpha p^\alpha$. By abuse of notation we will write 
$f(x^\alpha,v^u)$ for the representation of the distribution function in the
new coordinates. An advantage of the new coordinates is that the collision
term no longer contains any explicit dependence on the metric and is 
identical in form to what it is in special relativity. On the other hand
the transport part of the Boltzmann equation  becomes more complicated and the 
Christoffel symbols are replaced by the components of the Levi-Civita 
connection in the chosen frame. Thus we obtain the equation
\begin{equation}
v^\mu e_\mu^\alpha\frac{\partial f}{\partial x^\alpha}
-\eta^{k\sigma}g_{\alpha\beta}e^\alpha_\sigma e^\delta_\lambda
(\partial_\delta e^\beta_\mu+\Gamma^\beta_{\delta\epsilon}e^\epsilon_\mu)
v^\lambda v^\mu\frac{\partial f}{\partial v^k}=Q(f,f).
\end{equation} 

If the EB system is to define a closed system for the quantities
$g_{\alpha\beta}(x^\gamma)$ and $f(x^\gamma,v^u)$ then the orthonormal frame must 
be fixed in terms of the metric and the coordinates by some condition. This 
can be done in such a way that it depends only on the components 
$g_{\alpha\beta}$  and does not contain any direct dependence on the 
coordinates. One way of doing this is to use
the Gram-Schmidt process. This algorithm for producing an orthonormal 
frame from a general frame is best known in the case of positive definite
metrics but it works just as well for pseudo-Riemannian metrics provided
the starting frame is non-degenerate in the sense that the pull-back of
the metric to the subspace spanned by any non-empty subset of the vectors of 
the frame is a non-degenerate quadratic form. This process can be applied to
the coordinate frame $\partial/\partial x^\alpha$ to get the frame 
$e^\alpha_\mu$. Then the components $e^\alpha_\mu$ are algebraic functions
of the components $g_{\alpha\beta}$. In the case of the first two elements
of the basis, for instance, we get
\begin{eqnarray}
&&e_0^\mu=\delta_0^\mu(-g_{00})^{-\frac12}\\
&&e_1^\mu=(-g_{00}\delta_1^\mu+g_{01}\delta_0^\mu)
((g_{00})^2g_{11}-g_{00}(g_{01})^2)^{-\frac12}
\end{eqnarray}  
It then follows that the components $v^\alpha$ are also algebraic functions
of the components $g_{\alpha\beta}$. The coefficients in the transport part of
the Boltzmann equation are algebraic functions of the components 
$g_{\alpha\beta}$ and their first order partial derivatives.

\subsection{The post-collisional momenta}
A parametrization of the post-collisional momenta $p'^\mu$ and $q'^\mu$ in 
general 
relativity has been obtained in \cite{BCB73} using an orthonormal frame
$\{e_\alpha^\mu\}$ associated to a point of the mass shell. Note that this is a 
different type of object from the orthonormal frame introduced above. In 
the frame considered in \cite{BCB73} $p^\mu=e^\mu_\alpha v^\alpha$, 
$q^\mu=e^\mu_\alpha w^\alpha$ and $v^\alpha+w^\alpha=(\lambda,0,0,0)$.
The post-collisional momenta in this frame are represented by
\[
v'^\alpha=\frac{1}{2}(\lambda,\varrho\Omega),\quad
w'^\alpha=\frac{1}{2}(\lambda,-\varrho\Omega),
\]
where $\Omega\in S^2$. Consequently, $p'^\mu$ is given by
\[
p'^\mu=\frac{1}{2}
(e^\mu_0\lambda+e^\mu_1\varrho\cos\theta
+\varrho\sin\theta(e^\mu_2\cos\varphi+e^\mu_3\sin\varphi)),
\]
where $\Omega=(\cos\theta,\sin\theta\cos\varphi,\sin\theta\sin\varphi)$, and
a similar expression is obtained for $q'^\mu$. It is a natural approach to 
choose an orthonormal frame and then use formulae from special relativity, but
in some points it has an inconvenient aspect concerning the orthonormal frame.
In this paper, we consider the Sobolev spaces of order greater than five.
Hence, high order derivatives of some quantities are taken, and we need 
to estimate derivatives of the orthonormal frame chosen. If we follow the 
result of \cite{BCB73}, then derivatives of the post-collisional momenta are 
estimated as follows.
\[
|D^k_{x,p}p'|\leq (p^0)^{-|\hat{k}|}\sum \frac{(p^0q^0)^{|r|+|\hat{s}|
+1/2}}{\lambda^{2|r_1|}\varrho^{2|r_2|}|p\times q|^{|\hat{s}|}}
\left|g^{(|\bar{k}|)}\right|,
\]
where $k$ is a multi-index such that $D^k_{x,p}=D^{\bar{k}}_xD^{\hat{k}}_p$, the 
sum is over all the possible multi-indices $r_1$, $r_2$, $s$ satisfying 
$|r+s|\leq |k|$ and $r_1+r_2=r$, and the term $g^{(|\bar{k}|)}$ will be defined 
later. We can see that singularities appear in the denominator, which are 
certainly due to the choice of the orthonormal frame. This problem might be 
removed by taking another orthonormal frame at the singular points, but it 
would be quite complicated to find such an orthonormal frame that behaves well 
at those singularities. We instead use a different approach to parametrize the 
post-collisional momenta, where the above problem does not arise.

We introduce a new parametrization of post-collisional momenta.
Suppose that $p^\alpha$ and $q^\alpha$ are given, and consider the following 
four-vectors.
\begin{equation}\label{2.3}
n^\alpha=p^\alpha+q^\alpha\quad\mbox{{\rm and}}\quad
t^\alpha=(n_i\omega^i,-n_0\omega)\quad\mbox{{\rm for}}\quad \omega\in S^2,
\end{equation}
Note that $t^\alpha$ is orthogonal to $n^\alpha$.
Then $p'^\alpha$ and $q'^\alpha$ can be parametrized by
\begin{equation}\label{2.4}
\begin{aligned}
&p'^\alpha=p^\alpha-\frac{t_\beta(p^\beta-q^\beta)}{t_\beta t^\beta}t^\alpha
=p^\alpha+2\frac{t_\beta q^\beta}{t_\beta t^\beta}t^\alpha,\\
&q'^\alpha=q^\alpha+\frac{t_\beta(p^\beta-q^\beta)}{t_\beta t^\beta}t^\alpha
=q^\alpha-2\frac{t_\beta q^\beta}{t_\beta t^\beta}t^\alpha,
\end{aligned}
\end{equation}
where we used $t_\alpha n^\alpha=0$. It can be easily shown that they satisfy
\[
p'^\alpha+q'^\alpha=p^\alpha+q^\alpha\quad\mbox{{\rm and}}\quad
p'_\alpha p'^\alpha=q'_\alpha q'^\alpha=-1.
\]
We can use another parametrization for $p'^\alpha$ and $q'^\alpha$,
\begin{equation}\label{2.5}
p'^\alpha=\frac{p^\alpha+q^\alpha}{2}
+\frac{\varrho}{2}\frac{t^\alpha}{\sqrt{t_\beta t^\beta}},\quad
q'^\alpha=\frac{p^\alpha+q^\alpha}{2}
-\frac{\varrho}{2}\frac{t^\alpha}{\sqrt{t_\beta t^\beta}},
\end{equation}
where $\varrho$ and $t^\alpha$ are defined by \eqref{2.2} and \eqref{2.3} 
respectively.
We can see that the parametrizations \eqref{2.4} and \eqref{2.5} are very
similar to those of the nonrelativistic case. In the nonrelativistic case, 
post-collisional momenta $p'\in\bbr^3$ and $q'\in\bbr^3$ are represented in 
two ways.
\[
p'=p-((p-q)\cdot\omega)\omega,\quad
q'=q+((p-q)\cdot\omega)\omega\quad\mbox{{\rm for}}\quad \omega\in S^2
\]
or
\[
p'=\frac{p+q}{2}+\frac{|p-q|}{2}\omega,\quad
q'=\frac{p+q}{2}-\frac{|p-q|}{2}\omega
\quad\mbox{{\rm for}}\quad \omega\in S^2.
\]
Hence, \eqref{2.4} and \eqref{2.5} can be thought of as natural
generalizations from the nonrelativistic case.

We remark that a well-known parametrization of $p'^\alpha$ and $q'^\alpha$ in 
the special relativistic case by Glassey and Strauss \cite{GS91} is exactly the 
same as \eqref{2.4}. In the special relativistic case, 
$g_{\alpha\beta}=\eta_{\alpha\beta}$, we have
\[
t_\alpha t^\alpha=-(n\cdot\omega)^2+(n^0)^2=(p^0+q^0)^2-((p+q)\cdot\omega)^2,
\]
and by direct calculations,
\begin{align*}
t_\alpha q^\alpha &= -q^0n\cdot\omega +n^0\omega\cdot q
=-q^0(p\cdot\omega)-q^0(q\cdot\omega)+p^0(\omega\cdot q)+q^0(\omega\cdot q)\\
&=-p^0q^0(p/p^0\cdot\omega)+p^0q^0(q/q^0\cdot\omega)
=-p^0q^0((\hat{p}-\hat{q})\cdot\omega).
\end{align*}
Defining $\hat{p}=p/p^0$ and $\hat{q}=q/q^0$ we obtain the following 
expression.
\[
p'=p-\frac{2p^0q^0((\hat{p}-\hat{q})\cdot\omega)(p^0+q^0)}{(p^0+q^0)^2
-((p+q)\cdot\omega)^2}\omega.
\]
This is the parametrization given in \cite{GS91}.

We close this subsection with a simple lemma, which shows that
$p^0$ and $-p_0$ are equivalent in a curved spacetime when its metric is
close to the Minkowski metric.
\begin{lemma}
Consider a momentum $p^\alpha$ satisfying $g_{\alpha\beta}p^\alpha p^\beta=-1$.
Suppose that there exists a small $\varepsilon$ such that the metric 
$g_{\alpha\beta}$
satisfies
\[
|g_{\alpha\beta}-\eta_{\alpha\beta}|\leq \varepsilon\quad\mbox{{\rm and}}\quad
(1-\varepsilon)\sum_{i=1}^3(X^i)^2\leq g_{ij}X^i X^j\leq
(1+\varepsilon)\sum_{i=1}^3(X^i)^2
\]
for any three dimensional vector $X$. Then we have
\[
|p|\leq (1+\varepsilon_1)\min\{p^0,-p_0\}\quad\mbox{{\rm and}}\quad
(1-\varepsilon_1)p^0\leq -p_0\leq (1+\varepsilon_1)p^0
\]
for some small $\varepsilon_1>0$.
\end{lemma}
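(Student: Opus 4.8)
The plan is to work directly from the mass-shell formulas \eqref{2.6} and to treat the metric as a small perturbation of $\eta_{\alpha\beta}$, tracking all constants as explicit functions of $\varepsilon$ that vanish as $\varepsilon\to 0$. First I would record the elementary estimates that follow from the hypotheses: since $|g_{00}+1|\le\varepsilon$ we have $-g_{00}\in[1-\varepsilon,1+\varepsilon]$, and since $|g_{0i}|\le\varepsilon$ (because $\eta_{0i}=0$) we get $|g_{0i}p^i|\le\varepsilon\sum_i|p^i|\le \sqrt{3}\,\varepsilon\,|p|$ by Cauchy--Schwarz. The spatial part is controlled by the second hypothesis: $(1-\varepsilon)|p|^2\le g_{ij}p^ip^j\le(1+\varepsilon)|p|^2$.

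Next I would estimate $-p_0=\sqrt{(g_{0i}p^i)^2-g_{00}(g_{ij}p^ip^j+1)}$ from above and below. For the lower bound, drop the nonnegative term $(g_{0i}p^i)^2$ and use $-g_{00}\ge 1-\varepsilon$ together with $g_{ij}p^ip^j+1\ge(1-\varepsilon)|p|^2+1\ge(1-\varepsilon)(|p|^2+1)$, giving $-p_0\ge(1-\varepsilon)\sqrt{|p|^2+1}\ge(1-\varepsilon)|p|$. For the upper bound, use $(g_{0i}p^i)^2\le 3\varepsilon^2|p|^2$, $-g_{00}\le 1+\varepsilon$, and $g_{ij}p^ip^j+1\le(1+\varepsilon)|p|^2+1\le(1+\varepsilon)(|p|^2+1)$ to obtain $(-p_0)^2\le 3\varepsilon^2|p|^2+(1+\varepsilon)^2(|p|^2+1)\le\big((1+\varepsilon)^2+3\varepsilon^2\big)(|p|^2+1)$, hence $-p_0\le C(\varepsilon)\sqrt{|p|^2+1}$ with $C(\varepsilon)\to 1$ as $\varepsilon\to 0$. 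In particular $|p|\le -p_0/(1-\varepsilon)$, which already gives $|p|\le(1+\varepsilon_1)(-p_0)$ for a suitable $\varepsilon_1=\varepsilon_1(\varepsilon)\to 0$.

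For $p^0$, I would substitute these bounds into $p^0=\frac{1}{-g_{00}}\big(g_{0i}p^i-p_0\big)$ (using the identity $-p_0=\sqrt{(g_{0i}p^i)^2-g_{00}(g_{ij}p^ip^j+1)}$ from \eqref{2.6}). Then $p^0\le\frac{1}{1-\varepsilon}\big(\sqrt{3}\,\varepsilon|p|+C(\varepsilon)\sqrt{|p|^2+1}\big)$, which combined with $|p|\le(1+\varepsilon_1)(-p_0)$ and $\sqrt{|p|^2+1}\le$ (a constant near $1$)$\cdot(-p_0)$ yields $p^0\le(1+\varepsilon_1)(-p_0)$ after adjusting $\varepsilon_1$; similarly $p^0\ge\frac{1}{1+\varepsilon}\big(-\sqrt{3}\,\varepsilon|p|+(1-\varepsilon)\sqrt{|p|^2+1}\big)\ge(1-\varepsilon_1)(-p_0)$, where for the lower bound one uses $\sqrt{|p|^2+1}\ge -p_0/C(\varepsilon)$. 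Chaining these gives $(1-\varepsilon_1)p^0\le -p_0\le(1+\varepsilon_1)p^0$, and the bound $|p|\le -p_0\le$ (const)$\cdot p^0$ then delivers $|p|\le(1+\varepsilon_1)\min\{p^0,-p_0\}$.

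The only mildly delicate point is bookkeeping: one must fix a single $\varepsilon_1$ that simultaneously works in all three inequalities, and verify that it is an increasing function of $\varepsilon$ tending to $0$; this is routine since every intermediate constant is built from $\varepsilon$ by finitely many of the operations $+,\times,\sqrt{\cdot},(1\mp\varepsilon)^{-1}$ and each reduces to $1$ at $\varepsilon=0$. I expect no genuine obstacle here — the lemma is a soft perturbation statement — so the main task is simply to arrange the estimates so that the $\varepsilon$-dependence is transparent.
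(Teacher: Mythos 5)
Your proposal is correct and follows essentially the same route as the paper: both work directly from the explicit mass-shell formulas \eqref{2.6}, bound $|p|$ by $-p_0$ using the spatial quasi-isometry hypothesis, and then pass between $p^0$ and $-p_0$ via the identity $p_0=g_{00}p^0+g_{0i}p^i$ with $g_{00}\approx-1$ and $|g_{0i}|\le\varepsilon$. Your version merely tracks the $\varepsilon$-dependence of the constants more explicitly than the paper's terser ``direct calculation.''
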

\begin{proof}
By solving the equation $g_{\alpha\beta}p^\alpha p^\beta=-1$,
$p^0$ and $p_0$ can be expressed as functions of $g_{\alpha\beta}$ and $p^i$
as in (\ref{2.6}). Hence, by direct calculations,
\begin{align*}
|p|^2\leq (1+\varepsilon)g_{ij}p^ip^j
\leq (1+\varepsilon)((g_{0i}p^i)^2-g_{00}(g_{ij}p^ip^j+1))
=(1+\varepsilon)(p_0)^2,
\end{align*}
\[
-p_0=-g_{00}p^0-g_{0i}p^i
\leq (1+\varepsilon)p^0+\varepsilon|p|,
\]
which imply that $|p|\leq Cp_0$ and $|p|\leq Cp^0$ for some constant 
$C\approx1$.
From the identity $p_0=g_{00}p^0+g_{0i}p^i$ and the conditions on $g_{\alpha\beta}$,
we get the desired equivalence between $p^0$ and $-p_0$.
\end{proof}

\subsection{Derivatives of the post-collisional momenta}
In this paper, we use \eqref{2.4} for $p'^\alpha$ and $q'^\alpha$.
Since the parametrization \eqref{2.4} depends on $x$, $p$, and $q$,
derivatives of the post-collisional momenta are not trivial to compute,
but we can see that derivatives of $p'^\alpha$ and $q'^\alpha$
depend only on $t^\alpha$. To calculate derivatives of $p'^\alpha$ and $q'^\alpha$,
we use the notation $g^{(m)}$ of Bancel and Choquet-Bruhat from \cite{BCB73}.

A function $h$ is said to be a $g^{(m)}$ function, if $h$ is any linear 
combination of products,
$\prod_j D^{k_j} g_{\mu\nu}$,
with $\sum_j|k_j|\leq m$ and coefficients
in the algebra of bounded functions on $\Omega\times\bbr^3\times\bbr^3$
generated by $p^\alpha/p^0$, $p^\alpha/p_0$, $q^\alpha/q^0$, and $q^\alpha/q_0$.

\begin{lemma}
Let $t^\alpha$ be a four-vector defined by \eqref{2.3} for some $\omega\in S^2$.
Then, for $D^k=D^{\bar{k}}_x D^{\hat{k}}_p$ with multi-index $k=\bar{k}+\hat{k}$, 
we have
\[
D^k\left[2\frac{t_\beta q^\beta}{t_\beta t^\beta}t^\alpha\right]
=(p^0)^{-|\hat{k}|}\sum_{i=0}^{2+2|k|}
\frac{(p^0)^i(q^0)^{3+2|k|-i}}{(t_\alpha t^\alpha)^{1+|k|}}
g_i^{(|\bar{k}|)},
\]
where $g_i^{(|\bar{k}|)}$ are some $g^{(|\bar{k}|)}$ functions.
\end{lemma}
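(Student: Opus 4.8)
The plan is to prove the formula by induction on $|k|$ after first establishing the base case $k=0$ by a direct computation. For the base case, observe that $t_\beta q^\beta$ and $t_\beta t^\beta$ are both quadratic expressions in $n^\alpha=p^\alpha+q^\alpha$ with coefficients built from $\omega\in S^2$ and the metric components $g_{\alpha\beta}$; writing $n^\alpha=p^\alpha+q^\alpha$ and dividing numerator and denominator by appropriate powers of $p^0$ and $q^0$, one checks that $2(t_\beta q^\beta/t_\beta t^\beta)t^\alpha$ has the claimed homogeneity: the denominator contributes $(t_\alpha t^\alpha)^{-1}$, the factor $t^\alpha$ contributes one power of $n$ (hence a combination of $p^0,q^0$ after extracting $p^\alpha/p^0$ etc.), and $t_\beta q^\beta$ contributes a further factor linear in $n$ and in $q$. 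Keeping careful track, the $k=0$ case reads $(p^0)^0\sum_{i=0}^{2}\frac{(p^0)^i(q^0)^{3-i}}{t_\alpha t^\alpha}g_i^{(0)}$, which matches. The key structural point to record here is that $t^\alpha$ itself, $t_\beta q^\beta$, $t_\beta t^\beta$, and $t_\beta(p^\beta-q^\beta)$ are all of the form ($g^{(0)}$ function) times a monomial in $p^0,q^0$ of the appropriate degree, since $\omega$ is a bounded parameter and $n_\alpha,n^\alpha$ are linear in $p,q$ with metric coefficients.

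Next I would set up the inductive step. Assume the formula holds for all multi-indices of length $\leq m$ and let $|k|=m+1$, so $D^k=D_j D^{k'}$ where $D_j$ is a single derivative (either $\partial_{x^j}$ or $\partial_{p^j}$) and $|k'|=m$. Apply $D_j$ to the inductive expression $(p^0)^{-|\hat{k'}|}\sum_i \frac{(p^0)^i(q^0)^{3+2m-i}}{(t_\alpha t^\alpha)^{1+m}}g_i^{(|\bar{k'}|)}$ using the product and quotient rules. Here I would use the following elementary facts, each of which is a short computation: (i) $D_j p^0$ and $D_j p_0$ equal $(p^0)^{?}$ times a $g^{(1)}$ or $g^{(0)}$ function — more precisely $\partial_{p^i}p^0$ is a $g^{(0)}$ function (bounded, since it equals a ratio like $-p_i/p_0$ up to metric factors), while $\partial_{x^j}p^0$ carries one derivative of the metric, i.e.\ is $(p^0)\cdot g^{(1)}$ divided by nothing problematic; (ii) $D_j(t_\alpha t^\alpha)$ is (monomial of degree $2$ in $p^0,q^0$) times a $g^{(1)}$ function when $D_j=\partial_x$, and similarly with the momentum derivative lowering a power of $p^0$; (iii) $D_j(g_i^{(\ell)})$ is a $g^{(\ell+1)}$ function if $D_j=\partial_x$ and stays $g^{(\ell)}$ (up to the $(p^0)^{-1}$ bookkeeping absorbed in the prefactor) if $D_j=\partial_p$, because differentiating the bounded coefficients $p^\alpha/p^0$ etc.\ keeps them bounded. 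Combining (i)–(iii), each application of $D_j$ either raises the power of $(t_\alpha t^\alpha)$ in the denominator by one (from differentiating the denominator) or leaves it and instead raises a metric-derivative count, in both cases shifting the degree of the $p^0,q^0$ monomials by $+2$ and, when $D_j=\partial_p$, producing the extra overall factor $(p^0)^{-1}$. Matching these shifts against the claimed exponents $i\in\{0,\dots,2+2|k|\}$, denominator power $1+|k|$, and prefactor $(p^0)^{-|\hat{k}|}$ closes the induction.

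The main obstacle, and the step deserving the most care, is the bookkeeping of the exponent range $0\le i\le 2+2|k|$ under differentiation. Differentiating a denominator factor $t_\alpha t^\alpha$ produces in the numerator a term like $D_j(t_\alpha t^\alpha)\cdot(t_\alpha t^\alpha)^{-(2+m)}$, and $D_j(t_\alpha t^\alpha)$ is itself a degree-$2$ monomial in $p^0,q^0$ times a $g$-function; one must verify that multiplying the old numerator monomials of degrees up to $2+2m$ by this new degree-$2$ factor, while also having increased the denominator power from $1+m$ to $2+m$ — which forces rewriting one factor of $(t_\alpha t^\alpha)$ in the numerator as $t_\alpha t^\alpha$, again degree $2$ — lands exactly in the range $0\le i\le 2+2(m+1)=4+2m$ with denominator power $2+m=1+(m+1)$. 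It is a finite but slightly delicate count because $t_\alpha t^\alpha$ is not a pure power of $p^0$ or $q^0$ but a sum $\sum_{j}(p^0)^j(q^0)^{2-j}g_j^{(0)}$, so each "rewriting" step spreads a single monomial into the full allowed range; one should check closure of the range is preserved, which it is since $[0,2+2m]+[0,2]=[0,4+2m]$. The momentum-derivative case additionally requires checking that $\partial_{p^i}$ acting on $(p^0)^i$ produces $(p^0)^{i-1}$ times a $g^{(0)}$ function and hence the net effect is the promised extra $(p^0)^{-1}$; since $i$ can be $0$, one uses instead that $\partial_{p^i}$ of the bounded coefficients and of $t_\alpha t^\alpha$, $t_\beta q^\beta$ also contribute, and in every case the total power of $p^0$ drops by exactly one, consistent with the prefactor changing from $(p^0)^{-|\hat{k'}|}$ to $(p^0)^{-|\hat{k'}|-1}=(p^0)^{-|\hat{k}|}$. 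Once these two computations are verified, the statement for $q'^\alpha$ is immediate from \eqref{2.4} since $q'^\alpha-q^\alpha=-(p'^\alpha-p^\alpha)$, so no separate argument is needed.
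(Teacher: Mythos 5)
Your proposal is correct and follows essentially the same route as the paper: induction on $|k|$, with the base case $t^\alpha=p^0g^{(0)}_0+q^0g^{(0)}_1$, the preliminary facts $\partial_{p^k}p^0=-p_k/p_0$ (a $g^{(0)}$ function), $\partial_{x^\alpha}p^0=p^0g^{(1)}$, $\partial_{p^k}g^{(m)}=g^{(m)}/p^0$, $\partial_{x^\alpha}g^{(m)}=g^{(m+1)}$, and the expansions $t_\alpha t^\alpha=\sum_{i=0}^{2}(p^0)^i(q^0)^{2-i}g^{(0)}_i$ together with its $x$- and $p$-derivatives driving the exponent bookkeeping in the inductive step. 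The careful point you flag — bringing the two terms of the quotient rule to the common denominator $(t_\beta t^\beta)^{2+|k|}$ and checking that the monomial degrees land in $[0,4+2|k|]$ — is exactly what the paper's computation does.
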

\begin{proof}
We first note that $p$-derivatives of $p^\alpha$ and $p_\alpha$ are $g^{(0)}$ 
functions as follows,
\[
\frac{\partial p^0}{\partial p^k}=-\frac{p_k}{p_0}
=-\frac{p^\alpha}{p_0}g_{\alpha k},\qquad
\frac{\partial p^i}{\partial p^k}=\delta^i_k,
\]
\[
\frac{\partial p_\alpha}{\partial p^k}=\frac{\partial p^0}{\partial p^k}g_{0\alpha}
+\frac{\partial p^i}{\partial p^k}g_{i\alpha}
=-\frac{p^\beta}{p_0}g_{\beta k}g_{0\alpha}+g_{k\alpha},
\]
while their $x^\alpha$-derivatives are $p^0g^{(1)}$ functions,
\[
\frac{\partial p^0}{\partial x^\alpha}
=-\frac{p^\beta p^\gamma}{2p_0}\partial_\alpha g_{\beta\gamma},\qquad
\frac{\partial p_\beta}{\partial x^\alpha}
=\frac{\partial p^0}{\partial x^\alpha}g_{0\beta}
+p^\gamma\partial_\alpha g_{\beta\gamma}
=-\frac{p^\gamma p^\delta}{2p_0}\partial_\alpha g_{\gamma\delta}g_{0\beta}
+p^\gamma\partial_\alpha g_{\beta\gamma}.
\]
Hence, we can deduce that
\[
\frac{\partial g^{(m)}}{\partial p^k}=\frac{g^{(m)}}{p^0}\quad\mbox{{\rm and}}\quad
\frac{\partial g^{(m)}}{\partial x^\alpha}=g^{(m+1)}.
\]
To prove the lemma, we use induction on $|k|$.
Since $n^\alpha=p^\alpha+q^\alpha$, the four-vector $t^\alpha$ is of the form 
\[
t^\alpha=(n_i\omega^i,-n_0\omega)=n^\beta(g_{i\beta}\omega^i,-g_{0\beta}\omega)
=p^0g^{(0)}_0+q^0g^{(0)}_1
\]
for some $g^{(0)}_0$ and $g^{(0)}_1$, which are $g^{(0)}$ functions.
Hence, we obtain for $|k|=0$,
\[
2\frac{t_\beta q^\beta}{t_\beta t^\beta}t^\alpha
=\sum_{i=0}^{2}\frac{(p^0)^i(q^0)^{3-i}}{t_\alpha t^\alpha}g^{(0)}_i.
\]
We use $t_\alpha t^\alpha=\sum_{i=0}^{2}(p^0)^i(q^0)^{2-i}g^{(0)}_i$
to show that for $|k|+1$.
\begin{align*}
D_{x^\alpha}D^k\left[2\frac{t_\beta q^\beta}{t_\beta t^\beta}t^\gamma\right]
&=D_{x^\alpha}\left[(p^0)^{-|\hat{k}|}\sum_{i=0}^{2+2|k|}
\frac{(p^0)^i(q^0)^{3+2|k|-i}}{(t_\alpha t^\alpha)^{1+|k|}}
g_i^{(|\bar{k}|)}
\right]\\
&=(p^0)^{-|\hat{k}|}\sum_{i=0}^{2+2|k|}
\frac{(p^0)^i(q^0)^{3+2|k|-i}}{(t_\beta t^\beta)^{1+|k|}}
g_i^{(|\bar{k}|+1)}\\
&\quad+(p^0)^{-|\hat{k}|}\sum_{i=0}^{2+2|k|}
\frac{(p^0)^i(q^0)^{3+2|k|-i}}{(t_\beta t^\beta)^{2+|k|}}
D_{x^\alpha}\left[t_\beta t^\beta\right]g_i^{(|\bar{k}|)}.
\end{align*}
Since $D_{x^\alpha}\left[t_\beta t^\beta\right]
=\sum_{i=0}^{2}(p^0)^i(q^0)^{2-i}g^{(1)}_i$,
we obtain
\[
D_{x^\alpha}D^k\left[2\frac{t_\beta q^\beta}{t_\beta t^\beta}t^\gamma\right]
=(p^0)^{-|\hat{k}|}\sum_{i=0}^{4+2|k|}
\frac{(p^0)^i(q^0)^{5+2|k|-i}}{(t_\beta t^\beta)^{2+|k|}}
g_i^{(|\bar{k}|+1)}.
\]
Similarly,
\begin{align*}
D_{p^j}D^k\left[2\frac{t_\beta q^\beta}{t_\beta t^\beta}t^\gamma\right]
&=D_{p^j}\left[(p^0)^{-|\hat{k}|}\sum_{i=0}^{2+2|k|}
\frac{(p^0)^i(q^0)^{3+2|k|-i}}{(t_\alpha t^\alpha)^{1+|k|}}
g_i^{(|\bar{k}|)}
\right]\\
&=(p^0)^{-|\hat{k}|-1}\sum_{i=0}^{2+2|k|}
\frac{(p^0)^i(q^0)^{3+2|k|-i}}{(t_\beta t^\beta)^{1+|k|}}
g_i^{(|\bar{k}|)}\\
&\quad+(p^0)^{-|\hat{k}|}\sum_{i=0}^{2+2|k|}
\frac{(p^0)^i(q^0)^{3+2|k|-i}}{(t_\beta t^\beta)^{2+|k|}}
D_{p^j}\left[t_\beta t^\beta\right]g_i^{(|\bar{k}|)}.
\end{align*}
Since $D_{p^j}\left[t_\beta t^\beta\right]
=(p^0)^{-1}\sum_{i=0}^{2}(p^0)^i(q^0)^{2-i}g^{(0)}_i$,
we obtain
\[
D_{x^\alpha}D^k\left[2\frac{t_\beta q^\beta}{t_\beta t^\beta}t^\gamma\right]
=(p^0)^{-|\hat{k}|-1}\sum_{i=0}^{4+2|k|}
\frac{(p^0)^i(q^0)^{5+2|k|-i}}{(t_\beta t^\beta)^{2+|k|}}
g_i^{(|\bar{k}|)},
\]
and this completes the proof of the lemma.
\end{proof}

To estimate derivatives of the post-collisional momenta, we need to obtain a 
lower bound for the quantity $t_\alpha t^\alpha$, which appears
in the denominator in the expressions coming from Lemma 2.2. Note that 
$t_\alpha t^\alpha$ is non-negative, because $t_\alpha n^\alpha=0$ and 
$n^\alpha=p^\alpha+q^\alpha$ is a timelike vector.

\begin{lemma}
Let $t^\alpha$ be a four-vector defined by \eqref{2.3} for some $\omega\in S^2$.
Suppose that there exists a small $\varepsilon$ such that the metric 
$g_{\alpha\beta}$ satisfies
\[
|g_{\alpha\beta}-\eta_{\alpha\beta}|\leq \varepsilon\quad\mbox{{\rm and}}\quad
(1-\varepsilon)\sum_{i=1}^3(X^i)^2\leq g_{ij}X^i X^j\leq
(1+\varepsilon)\sum_{i=1}^3(X^i)^2
\]
for any three dimensional vector $X$, then we have the following lower bound.
\[
t_\alpha t^\alpha \geq 2(g_{0i}\omega^i)^2-g_{00}(g_{ij}\omega^i\omega^j)
\left(2+\frac{(p_0)^2+(q_0)^2}{3p_0q_0}\right).
\]
\end{lemma}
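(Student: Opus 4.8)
The plan is to reduce the asserted inequality to the clean intermediate estimate $t_\alpha t^\alpha \ge -g_{00}\,\lambda^2\,h_{ij}\omega^i\omega^j$, where $h_{ij}:=g_{ij}-g_{0i}g_{0j}/g_{00}$, and then to bound $\lambda^2$ below in terms of $p_0$ and $q_0$. Note that the smallness hypotheses guarantee $g_{00}<0$ and $g_{ij}\omega^i\omega^j\ge 1-\varepsilon>0$, and that $h_{ij}$ is positive definite: it is the ``spatial'' block obtained by completing the square in $X^0$ in the Lorentzian quadratic form $g_{\alpha\beta}X^\alpha X^\beta=g_{00}(X^0+g_{0i}X^i/g_{00})^2+h_{ij}X^iX^j$.

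Using $t^0=n_i\omega^i$ and $t^k=-n_0\omega^k$ from \eqref{2.3}, a direct computation gives
\[
t_\alpha t^\alpha = g_{00}(n_i\omega^i)^2 - 2n_0(g_{0j}\omega^j)(n_i\omega^i) + (n_0)^2 g_{ij}\omega^i\omega^j .
\]
Completing the square in $n_i\omega^i$ and using the algebraic identities $n_i\omega^i - n_0(g_{0j}\omega^j)/g_{00}=h_{ij}n^i\omega^j$ and $g_{ij}\omega^i\omega^j-(g_{0i}\omega^i)^2/g_{00}=h_{ij}\omega^i\omega^j$, this becomes $t_\alpha t^\alpha = g_{00}(h_{ij}n^i\omega^j)^2 + (n_0)^2 h_{ij}\omega^i\omega^j$. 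Applying Cauchy--Schwarz for the positive definite form $h$ to the first term, and remembering that multiplication by $g_{00}<0$ reverses the inequality,
\[
t_\alpha t^\alpha \ge (h_{ij}\omega^i\omega^j)\bigl(g_{00}\,h_{kl}n^kn^l + (n_0)^2\bigr).
\]
Completing the square in $X^0$ in $g_{\alpha\beta}n^\alpha n^\beta$ gives $g_{\alpha\beta}n^\alpha n^\beta=(n_0)^2/g_{00}+h_{ij}n^in^j$, and since $g_{\alpha\beta}n^\alpha n^\beta=-\lambda^2$ this yields $g_{00}\,h_{kl}n^kn^l+(n_0)^2=-g_{00}\lambda^2$, hence
\[
t_\alpha t^\alpha \ge -g_{00}\,\lambda^2\,h_{ij}\omega^i\omega^j = -g_{00}\,\lambda^2\,g_{ij}\omega^i\omega^j + \lambda^2(g_{0i}\omega^i)^2 .
\]

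It remains to bound $\lambda^2$ below. Since $\lambda^2=-(p_\alpha+q_\alpha)(p^\alpha+q^\alpha)=2-2p_\alpha q^\alpha$ and $p_\alpha q^\alpha\le -1$ for the future timelike unit vectors $p,q$, one has $\lambda^2\ge 4$, which already dominates the factor $2$ multiplying $(g_{0i}\omega^i)^2$. For the other term I need the sharper bound $\lambda^2\ge 2+\frac{(p_0)^2+(q_0)^2}{3p_0q_0}$, i.e. $-p_\alpha q^\alpha\ge\frac{(p_0)^2+(q_0)^2}{6p_0q_0}$, which I would prove geometrically. Decompose the coordinate vector $\delta^\alpha_0$ as $-p_0\,p^\alpha+s^\alpha$ with $s\perp p$, so that $g_{\alpha\beta}s^\alpha s^\beta=g_{00}+(p_0)^2$ and $q_0=g_{0\alpha}q^\alpha=-p_0\,(p_\alpha q^\alpha)+g_{\alpha\beta}s^\alpha q^\beta$; then decompose $q^\alpha=-(p_\beta q^\beta)p^\alpha+q^{\perp\alpha}$ with $q^\perp\perp p$, so that $g_{\alpha\beta}q^{\perp\alpha}q^{\perp\beta}=(p_\alpha q^\alpha)^2-1$ and $g_{\alpha\beta}s^\alpha q^\beta=g_{\alpha\beta}s^\alpha q^{\perp\beta}$. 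Since $p$ is timelike its orthogonal complement is positive definite, so Cauchy--Schwarz there gives $|g_{\alpha\beta}s^\alpha q^\beta|\le\sqrt{g_{00}+(p_0)^2}\,\sqrt{(p_\alpha q^\alpha)^2-1}\le |p_0|\,|p_\alpha q^\alpha|$; combined with the expression for $q_0$ this yields $|q_0|\le 2|p_0|\,|p_\alpha q^\alpha|$, i.e. $-p_\alpha q^\alpha=|p_\alpha q^\alpha|\ge |q_0|/(2|p_0|)$. By the symmetric argument $|p_\alpha q^\alpha|\ge |p_0|/(2|q_0|)$, and averaging the two gives $-p_\alpha q^\alpha\ge\frac{(p_0)^2+(q_0)^2}{4p_0q_0}\ge\frac{(p_0)^2+(q_0)^2}{6p_0q_0}$, hence $\lambda^2\ge 2+\frac{(p_0)^2+(q_0)^2}{3p_0q_0}$.

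Putting the pieces together: since $-g_{00}\,g_{ij}\omega^i\omega^j>0$ and $(g_{0i}\omega^i)^2\ge 0$, inserting $\lambda^2\ge 2$ into the coefficient of $(g_{0i}\omega^i)^2$ and $\lambda^2\ge 2+\frac{(p_0)^2+(q_0)^2}{3p_0q_0}$ into the other term of the last displayed inequality produces exactly
\[
t_\alpha t^\alpha \ge 2(g_{0i}\omega^i)^2 - g_{00}(g_{ij}\omega^i\omega^j)\Bigl(2+\frac{(p_0)^2+(q_0)^2}{3p_0q_0}\Bigr).
\]
The completion-of-squares manipulations and the $h_{ij}$ identities are routine; the real work — and the step I expect to be the main obstacle — is the refined lower bound on $\lambda^2$ (equivalently on $-p_\alpha q^\alpha$) in terms of $p_0$ and $q_0$, where the Lorentzian geometry, namely positive-definiteness of $p^\perp$ together with reverse Cauchy--Schwarz, is essential. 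Incidentally the constant $\tfrac13$ in the statement is not optimal: the argument above in fact gives $\tfrac12$, leaving room to absorb cruder estimates if one prefers to argue via the $\varepsilon$-smallness directly.
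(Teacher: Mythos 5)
Your proof is correct, but it follows a genuinely different route from the paper's. The paper expands $t_\alpha t^\alpha$ into the six terms of its \eqref{2.7}, estimates them pairwise with the spatial Cauchy--Schwarz inequality, exploits cancellations coming from $n^\alpha=p^\alpha+q^\alpha$, and finally extracts the term $((p_0)^2+(q_0)^2)/(3p_0q_0)$ through a long computation with the explicit formulas \eqref{2.6} for $p_0,q_0$, invoking Lemma 2.1 midway to bound a denominator by $3p_0q_0$. You instead complete the square twice to obtain the exact identity $t_\alpha t^\alpha=g_{00}(h_{ij}n^i\omega^j)^2+(n_0)^2h_{ij}\omega^i\omega^j$ with $h_{ij}=g_{ij}-g_{0i}g_{0j}/g_{00}$, apply a single Cauchy--Schwarz inequality for the positive definite form $h$ to reach the covariant bound $t_\alpha t^\alpha\geq -g_{00}\lambda^2 h_{ij}\omega^i\omega^j$, and then prove the purely kinematic estimate $-p_\alpha q^\alpha\geq((p_0)^2+(q_0)^2)/(4p_0q_0)$ by decomposing $\delta_0^\alpha$ and $q^\alpha$ orthogonally along the timelike vector $p^\alpha$ and using Cauchy--Schwarz in the (positive definite) orthogonal complement. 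I checked the individual identities ($s_\alpha s^\alpha=g_{00}+(p_0)^2$, $q^\perp_\alpha q^{\perp\alpha}=(p_\alpha q^\alpha)^2-1$, both completions of squares, and the sign reversal when multiplying by $g_{00}<0$), and the pieces assemble to the stated inequality. Your route buys three things: a sharper constant ($\tfrac12$ in place of $\tfrac13$, as you note); a cleaner intermediate statement, $t_\alpha t^\alpha\geq-g_{00}\lambda^2h_{ij}\omega^i\omega^j$, which makes transparent why $t^\alpha$ is spacelike and reduces the whole lemma to a lower bound on $\lambda^2$; and independence from the $\varepsilon$-smallness hypotheses, since your argument only uses $g_{00}<0$ and positive definiteness of $g_{ij}$ (hence of $h_{ij}$), whereas the paper's proof genuinely needs Lemma 2.1. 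The one point you should state explicitly is that $p^\alpha$ and $q^\alpha$ lie in the same time cone as $\partial/\partial x^0$ --- this is what \eqref{2.6} with $p_0<0$ encodes --- since it is needed both for $p_\alpha q^\alpha\leq-1$ and for $p_0q_0>0$ in the final averaging step.
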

\begin{proof}
The proof is a direct calculation. Since $t^\alpha=(n_i\omega^i,-n_0\omega)$, 
we have
\begin{align}
t_\alpha t^\alpha &=(n_0)^2(g_{ij}\omega^i\omega^j)
-2(n_i\omega^i)n_0(g_{0i}\omega^i)
+g_{00}(n_i\omega^i)^2\cr
&=(n_0)^2(g_{ij}\omega^i\omega^j)
+g_{00}((g_{ij}n^i\omega^j)^2+2(g_{ij}n^i\omega^j)n^0(g_{0i}\omega^i)
+(n^0)^2(g_{0i}\omega^i)^2)\cr
&\quad -2((g_{ij}n^i\omega^j)n_0(g_{0i}\omega^i)+n^0n_0(g_{0i}\omega^i)^2),
\label{2.7}
\end{align}
where we used $n_i\omega^i=g_{ij}n^i\omega^j+n^0(g_{0i}\omega^i)$.
There are six terms on the right hand side of \eqref{2.7}. The first two of
these terms can be estimated as follows using the Cauchy-Schwarz inequality.
\begin{align*}
(n_0)^2(g_{ij}\omega^i\omega^j)+g_{00}(g_{ij}n^i\omega^j)^2
&\geq (n_0)^2(g_{ij}\omega^i\omega^j)+g_{00}(g_{ij}n^in^j)(g_{ij}\omega^i\omega^j),
\end{align*}
We continue the estimation as follows
\begin{align}
&\hspace{-0.5cm}(n_0)^2(g_{ij}\omega^i\omega^j)+g_{00}(g_{ij}n^i\omega^j)^2\cr
&\geq (g_{ij}\omega^i\omega^j)
((p_0)^2+(q_0)^2+2p_0q_0+g_{00}(g_{ij}p^ip^j)+g_{00}(g_{ij}q^iq^j)
+2g_{00}(g_{ij}p^iq^j))\cr
&=(g_{ij}\omega^i\omega^j)
((g_{0i}p^i)^2+(g_{0i}q^i)^2-2g_{00}+2p_0q_0+2g_{00}(g_{ij}p^iq^j)),\label{2.8}
\end{align}
using
\[
p_0=-\sqrt{(g_{0i}p^i)^2-g_{00}((g_{ij}p^ip^j)+1)}
\]
and the analogous formula for $q_0$. The fourth and the sixth terms
on the right hand side of \eqref{2.7} are calculated as follows.
\begin{align*}
&\hspace{-0.5cm}(g_{0i}\omega^i)^2(g_{00}(n^0)^2-2n^0n_0)\\
&=(g_{0i}\omega^i)^2(g_{00}(p^0)^2+g_{00}(q^0)^2+2g_{00}p^0q^0
-2p^0p_0-2p^0q_0-2q^0p_0-2q^0q_0)\\
&=(g_{0i}\omega^i)^2(-g_{00}(p^0)^2-g_{00}(q^0)^2-2g_{00}p^0q^0
-2p^0(g_{0i}n^i)-2q^0(g_{0i}n^i)),
\end{align*}
where we used $p_0=g_{00}p^0+(g_{0i}p^i)$ and the analogous formula for $q_0$.
To proceed with the calculation, we use the formula 
\[
p^0=\frac{1}{-g_{00}}\left(g_{0i}p^i+\sqrt{(g_{0i}p^i)^2
-g_{00}(g_{ij}p^ip^j+1)}\right)
\]
and the corresponding formula for $q^0$. After a long calculation we obtain
\begin{align}
&\hspace{-0.5cm}(g_{0i}\omega^i)^2(g_{00}(n^0)^2-2n^0n_0)\cr
&=(g_{0i}\omega^i)^2((g_{ij}p^ip^j)+(g_{ij}q^iq^j)+2)\cr
&\quad+2(g_{0i}\omega^i)^2\frac{1}{-g_{00}}
\sqrt{(g_{0i}p^i)^2-g_{00}((g_{ij}p^ip^j)+1)}\sqrt{(g_{0i}q^i)^2
-g_{00}((g_{ij}q^iq^j)+1)}\cr
&\quad-2(g_{0i}\omega^i)^2\frac{1}{-g_{00}}(g_{0i}p^i)(g_{0i}q^i).\label{2.9}
\end{align}
The third and the fifth terms on the right hand side of \eqref{2.7} are 
estimated as follows.
\begin{align}
(g_{0i}\omega^i)(g_{ij}n^i\omega^j)(2g_{00}n^0-2n_0)
&=-2(g_{0i}\omega^i)(g_{ij}n^i\omega^j)(g_{0i}n^i)\cr
&\geq -2|g_{0i}\omega^i|\sqrt{g_{ij}n^in^j}
\sqrt{g_{ij}\omega^i\omega^j}|g_{0i}n^i|\cr
&\geq -(g_{0i}\omega^i)^2(g_{ij}n^in^j)-(g_{ij}\omega^i\omega^j)(g_{0i}n^i)^2,
\label{2.10}
\end{align}
where we used the Cauchy-Schwarz inequality. Hence all terms on the right
hand side of  \eqref{2.7} have been estimated by \eqref{2.8}--\eqref{2.10}.
Moreover, since $n^\alpha=p^\alpha+q^\alpha$, the first two terms on the right 
hand side of \eqref{2.8} are partially cancelled by the second term on the 
right hand side of \eqref{2.10}, and similarly the first two terms on the 
right hand side of \eqref{2.9} are partially cancelled by the first
term on the right hand side of \eqref{2.10} as follows.
\begin{align}
(g_{ij}\omega^i\omega^j)((g_{0i}p^i)^2+(g_{0i}q^i)^2-(g_{0i}n^i)^2)
=-2(g_{ij}\omega^i\omega^j)(g_{0i}p^i)(g_{0i}q^i),\label{2.11}
\end{align}
and
\begin{align}
(g_{0i}\omega^i)^2((g_{ij}p^ip^j)+(g_{ij}q^iq^j)-(g_{ij}n^in^j))
&=-2(g_{0i}\omega^i)^2(g_{ij}p^iq^j)\cr
&\geq -2(g_{0i}\omega^i)^2
\sqrt{g_{ij}p^ip^j}\sqrt{g_{ij}q^iq^j}.\label{2.12}
\end{align}
As a consequence, \eqref{2.7} is estimated by \eqref{2.11}--\eqref{2.12} and
the other terms of \eqref{2.8}--\eqref{2.9}, i.e.,
\begin{align*}
t_\alpha t^\alpha
&\geq (g_{ij}\omega^i\omega^j)(-2g_{00}+2p_0q_0
+2g_{00}(g_{ij}p^iq^j)-2(g_{0i}p^i)(g_{0i}q^i))\cr
&\quad+(g_{0i}\omega^i)^2\left(-2\sqrt{g_{ij}p^ip^j}\sqrt{g_{ij}q^iq^j}+2\right)\cr
&\quad+2(g_{0i}\omega^i)^2\frac{1}{-g_{00}}
\sqrt{(g_{0i}p^i)^2-g_{00}((g_{ij}p^ip^j)+1)}\sqrt{(g_{0i}q^i)^2
-g_{00}((g_{ij}q^iq^j)+1)}\cr
&\quad-2(g_{0i}\omega^i)^2\frac{1}{-g_{00}}(g_{0i}p^i)(g_{0i}q^i),
\end{align*}
and we rewrite it as follows.
\begin{align}
t_\alpha t^\alpha
&\geq-2(g_{ij}\omega^i\omega^j)g_{00}+2(g_{0i}\omega^i)^2\cr
&\quad+2(g_{ij}\omega^i\omega^j)(p_0q_0+g_{00}(g_{ij}p^iq^j)
-(g_{0i}p^i)(g_{0i}q^i))\cr
&\quad+2(g_{0i}\omega^i)^2\frac{1}{-g_{00}}
\sqrt{(g_{0i}p^i)^2-g_{00}((g_{ij}p^ip^j)+1)}
\sqrt{(g_{0i}q^i)^2-g_{00}((g_{ij}q^iq^j)+1)}\cr
&\quad-2(g_{0i}\omega^i)^2\frac{1}{-g_{00}}\left(-g_{00}
\sqrt{g_{ij}p^ip^j}\sqrt{g_{ij}q^iq^j}
+(g_{0i}p^i)(g_{0i}q^i)\right).\label{2.13}
\end{align}
Note that the first line of \eqref{2.13} is strictly positive, and
the sum of the third and the fourth lines
is non-negative because of the following calculation.
\begin{align*}
&\hspace{-0.5cm}\left(-g_{00}\sqrt{g_{ij}p^ip^j}\sqrt{g_{ij}q^iq^j}
+(g_{0i}p^i)(g_{0i}q^i)\right)^2\\
&=(g_{00})^2(g_{ij}p^ip^j)(g_{ij}q^iq^j)
+(g_{0i}p^i)^2(g_{0i}q^i)^2-2g_{00}(g_{0i}p^i)(g_{0i}q^i)
\sqrt{g_{ij}p^ip^j}\sqrt{g_{ij}q^iq^j}\\
&\leq (g_{00})^2(g_{ij}p^ip^j)(g_{ij}q^iq^j)
+(g_{0i}p^i)^2(g_{0i}q^i)^2
-g_{00}((g_{0i}p^i)^2(g_{ij}q^iq^j)
+(g_{0i}q^i)^2(g_{ij}p^ip^j)).
\end{align*}
We now have only the second line of \eqref{2.13}, which is explicitly 
calculated as follows.
Note that the coefficient $g_{ij}\omega^i\omega^j$ is strictly positive, and we 
have
\begin{align}
&\hspace{-0.5cm}p_0q_0+g_{00}(g_{ij}p^iq^j)-(g_{0i}p^i)(g_{0i}q^i)\cr
&\geq p_0q_0-\left(-g_{00}\sqrt{g_{ij}p^ip^j}\sqrt{g_{ij}q^iq^j}
+|g_{0i}p^i||g_{0i}q^i|\right)\cr
&=\frac{(p_0q_0)^2-\left(-g_{00}\sqrt{g_{ij}p^ip^j}
\sqrt{g_{ij}q^iq^j}+|g_{0i}p^i||g_{0i}q^i|\right)^2
}{p_0q_0+\left(-g_{00}\sqrt{g_{ij}p^ip^j}
\sqrt{g_{ij}q^iq^j}+|g_{0i}p^i||g_{0i}q^i|\right)}\cr
&\geq\frac{1}{3p_0q_0}\left((p_0q_0)^2-\left(-g_{00}\sqrt{g_{ij}p^ip^j}
\sqrt{g_{ij}q^iq^j}+|g_{0i}p^i||g_{0i}q^i|\right)^2\right),\label{2.14}
\end{align}
where we used Lemma 2.1.
Note that
\begin{align*}
(p_0q_0)^2
&=(g_{0i}p^i)^2(g_{0i}q^i)^2\cr
&\quad -g_{00}(g_{0i}p^i)^2(g_{ij}q^iq^j)-g_{00}(g_{0i}p^i)^2
-g_{00}(g_{0i}q^i)^2(g_{ij}p^ip^j)-g_{00}(g_{0i}q^i)^2\cr
&\quad+(g_{00})^2((g_{ij}p^ip^j)(g_{ij}q^iq^j)+g_{ij}p^ip^j+g_{ij}q^iq^j+1),
\end{align*}
and
\begin{align*}
&\hspace{-0.5cm}\left(-g_{00}\sqrt{g_{ij}p^ip^j}
\sqrt{g_{ij}q^iq^j}+|g_{0i}p^i||g_{0i}q^i|\right)^2\\
&=(g_{00})^2(g_{ij}p^ip^j)(g_{ij}q^iq^j)
-2g_{00}|g_{0i}p^i||g_{0i}q^i|
\sqrt{g_{ij}p^ip^j}\sqrt{g_{ij}q^iq^j}
+(g_{0i}p^i)^2(g_{0i}q^i)^2\\
&\leq (g_{00})^2(g_{ij}p^ip^j)(g_{ij}q^iq^j)
-g_{00}(g_{0i}p^i)^2(g_{ij}q^iq^j)
-g_{00}(g_{0i}q^i)^2(g_{ij}p^ip^j)
+(g_{0i}p^i)^2(g_{0i}q^i)^2.
\end{align*}
Then, \eqref{2.14} is estimated as follows.
\begin{align}
&\hspace{-0.5cm}p_0q_0+g_{00}(g_{ij}p^iq^j)-(g_{0i}p^i)(g_{0i}q^i)\cr
&\geq\frac{1}{3p_0q_0}(-g_{00}(g_{0i}p^i)^2-g_{00}(g_{0i}q^i)^2
+(g_{00})^2(g_{ij}p^ip^j+g_{ij}q^iq^j+1))\cr
&\geq\frac{-g_{00}}{6p_0q_0}((p_0)^2+(q_0)^2),\label{2.15}
\end{align}
where we used the explicit formula \eqref{2.6} for $p_0$ and $q_0$.
We apply the estimate \eqref{2.15} to \eqref{2.13},
and obtain the desired lower bound.
\end{proof}

\begin{remark}
In the special relativistic case, we have a more refined result,
\[
t_\alpha t^\alpha\geq 2+|p\times\omega|^2+|q\times\omega|^2
+\frac{1+|p|^2+|q|^2}{\sqrt{1+|p|^2}\sqrt{1+|q|^2}},
\]
which was obtained in \cite{GS91}. This inequality implies that 
$t_\alpha t^\alpha\geq p^0/q^0$
in the special relativistic case, and Lemma 2.3 shows that a similar 
inequality holds in a curved
spacetime, i.e.,
\begin{equation}
t_\alpha t^\alpha \geq Cp_0/q_0\quad\mbox{{\rm or}}\quad
t_\alpha t^\alpha \geq Cp^0/q^0\label{2.16}
\end{equation}
for some positive constant $C$.
\end{remark}

We apply the inequality \eqref{2.16} to the result of Lemma 2.2,
and obtain the following estimate. Under the same assumptions as in Lemma 2.2, 
we have
\begin{align*}
\left|D^k\left[2\frac{t_\beta q^\beta}{t_\beta t^\beta}t^\alpha\right]\right|
&\leq (p^0)^{-1-2|k|}\sum_{i=0}^{2+2|k|}(p^0)^{i+|\bar{k}|}(q^0)^{4+3|k|-i}
\left|g_i^{(|\bar{k}|)}\right|\\
&\leq C (p^0)^{1+|\bar{k}|}(q^0)^{4+3|k|}
\left|g^{(|\bar{k}|)}\right|
\end{align*}
for some $g^{(|\bar{k}|)}$ and a positive constant $C$.
Since $p'^\alpha$ and $q'^\alpha$ are parametrized by \eqref{2.4} and it can be 
easily shown that
$D^k p^\alpha =(p^0)^{1-|\hat{k}|}g^{(|\bar{k}|)}$,
we obtain the following result on derivatives of the post-collisional momenta.
\begin{lemma}
Let $t^\alpha$ be a four-vector defined by \eqref{2.3} for some $\omega\in S^2$,
and $p'^\alpha$ and $q'^\alpha$ be two post-collisional momenta parametrized by 
\eqref{2.4}.
Suppose that there exists a small $\varepsilon$ such that the metric 
$g_{\alpha\beta}$ satisfies
\[
|g_{\alpha\beta}-\eta_{\alpha\beta}|\leq \varepsilon\quad\mbox{{\rm and}}\quad
(1-\varepsilon)\sum_{i=1}^3(X^i)^2\leq g_{ij}X^i X^j\leq
(1+\varepsilon)\sum_{i=1}^3(X^i)^2
\]
for any three dimensional vector $X$.
Then, derivatives of $p'^\alpha$ and $q'^\alpha$ can be estimated as follows.
\[
|D^kp'^\alpha|+|D^kq'^\alpha|\leq C (p^0)^{1+|\bar{k}|}(q^0)^{4+3|k|}
\left|g^{(|\bar{k}|)}\right|
\]
for some $g^{(|\bar{k}|)}$ and
$D^k=D^{\bar{k}}_xD^{\hat{k}}_p$ with multi-index $k=\bar{k}+\hat{k}$.
\end{lemma}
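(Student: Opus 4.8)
The plan is to assemble the estimate from the three lemmas that precede it. Recall that by the parametrization \eqref{2.4} we have $p'^\alpha=p^\alpha+2\frac{t_\beta q^\beta}{t_\beta t^\beta}t^\alpha$ and $q'^\alpha=q^\alpha-2\frac{t_\beta q^\beta}{t_\beta t^\beta}t^\alpha$, so that the derivatives $D^kp'^\alpha$ and $D^kq'^\alpha$ differ from $D^kp^\alpha$, respectively $D^kq^\alpha$, only by $\pm D^k\big[2\frac{t_\beta q^\beta}{t_\beta t^\beta}t^\alpha\big]$. Thus the proof reduces to two ingredients: a bound on $D^k$ of the correction term, and a bound on $D^k p^\alpha$ and $D^k q^\alpha$ themselves.

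First I would handle the correction term. Lemma 2.2 gives the exact algebraic form
\[
D^k\left[2\frac{t_\beta q^\beta}{t_\beta t^\beta}t^\alpha\right]
=(p^0)^{-|\hat{k}|}\sum_{i=0}^{2+2|k|}
\frac{(p^0)^i(q^0)^{3+2|k|-i}}{(t_\alpha t^\alpha)^{1+|k|}}g_i^{(|\bar k|)}.
\]
Each $g_i^{(|\bar k|)}$ is, by definition, a linear combination of products of up to $|\bar k|$ coordinate derivatives of $g_{\mu\nu}$ with coefficients bounded on $\Omega\times\bbr^3\times\bbr^3$, so $|g_i^{(|\bar k|)}|\le C|g^{(|\bar k|)}|$ for a single $g^{(|\bar k|)}$ function (taking the sum of absolute values). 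To kill the denominator $(t_\alpha t^\alpha)^{1+|k|}$ I invoke \eqref{2.16}, the consequence of Lemma 2.3 and Remark 2.4 (valid precisely under the stated smallness hypothesis on $g_{\alpha\beta}$), in the form $t_\alpha t^\alpha\ge C\,p^0/q^0$, whence $(t_\alpha t^\alpha)^{-(1+|k|)}\le C(q^0/p^0)^{1+|k|}$. Substituting and using $p^0\le Cq^0\cdot(p^0/q^0)$ trivially, or more simply bounding $(p^0)^i\le (p^0)^{2+2|k|}$ and $(q^0)^{3+2|k|-i}\le(q^0)^{3+2|k|}$ over the finite range of $i$, one collects powers: $(p^0)^{-|\hat k|}(p^0)^{2+2|k|}(p^0)^{-1-|k|}=(p^0)^{1+|k|-|\hat k|}=(p^0)^{1+|\bar k|}$, and $(q^0)^{3+2|k|}(q^0)^{1+|k|}=(q^0)^{4+3|k|}$. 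This produces exactly the intermediate estimate displayed just before the lemma statement,
\[
\left|D^k\left[2\frac{t_\beta q^\beta}{t_\beta t^\beta}t^\alpha\right]\right|
\le C(p^0)^{1+|\bar k|}(q^0)^{4+3|k|}\left|g^{(|\bar k|)}\right|.
\]

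Next I record the bound on $D^kp^\alpha$. Using the formulas already derived in the proof of Lemma 2.2 for $\partial p^0/\partial p^k$, $\partial p^0/\partial x^\alpha$ (and the trivial $\partial p^i/\partial p^k=\delta^i_k$), together with the stated transformation rules $\partial_{p^k}g^{(m)}=g^{(m)}/p^0$ and $\partial_{x^\alpha}g^{(m)}=g^{(m+1)}$, an easy induction on $|k|$ gives $D^kp^\alpha=(p^0)^{1-|\hat k|}g^{(|\bar k|)}$; the same holds for $q^\alpha$ with $p^0$ replaced by $q^0$. Since $|\hat k|\ge 0$ we have $(p^0)^{1-|\hat k|}\le p^0\le(p^0)^{1+|\bar k|}$, and likewise $q^0\le(q^0)^{4+3|k|}$ (as $q^0\ge1$ on the mass shell when $g$ is close to $\eta$, by Lemma 2.1), so both $|D^kp^\alpha|$ and $|D^kq^\alpha|$ are dominated by $C(p^0)^{1+|\bar k|}(q^0)^{4+3|k|}|g^{(|\bar k|)}|$. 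Adding the three contributions — $D^kp^\alpha$, $D^kq^\alpha$, and $\pm D^k$ of the correction — and absorbing the finitely many $g^{(|\bar k|)}$ functions into one, yields
\[
|D^kp'^\alpha|+|D^kq'^\alpha|\le C(p^0)^{1+|\bar k|}(q^0)^{4+3|k|}\left|g^{(|\bar k|)}\right|,
\]
which is the claim.

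The only real obstacle is the lower bound on $t_\alpha t^\alpha$, but this is already supplied by Lemma 2.3 together with Remark 2.4; everything else is bookkeeping of powers of $p^0$ and $q^0$ and repeated use of the two elementary differentiation rules for $g^{(m)}$ functions. One should just be a little careful that the smallness hypothesis on $g_{\alpha\beta}$ in the lemma is exactly what is needed both for \eqref{2.16} and for the equivalences $|p|\lesssim p^0\approx -p_0$ of Lemma 2.1 that underlie treating $p^0$, $-p_0$ (and $q^0$, $-q_0$) interchangeably inside the $g^{(m)}$ algebra; since the hypotheses are stated verbatim, no extra work is required.
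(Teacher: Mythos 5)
Your proposal is correct and follows essentially the same route as the paper: the paper also obtains the lemma by inserting the lower bound \eqref{2.16} for $t_\alpha t^\alpha$ into the exact expression from Lemma 2.2, collecting the powers of $p^0$ and $q^0$ exactly as you do, and combining with $D^kp^\alpha=(p^0)^{1-|\hat k|}g^{(|\bar k|)}$. Your bookkeeping of exponents matches the paper's intermediate display, so nothing further is needed.
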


\subsection{The $\mu-N$ regularity of the collision operator}
In \cite{BCB73}, the authors introduced a regularity property of $Q$
called $\mu-N$ regularity, and showed that the Cauchy problem for the EB 
system is well-posed when $Q$ satisfies it. The collision operator $Q$ is said 
to satisfy $\mu-N$ regularity if there exists a constant $C$ satisfying
for each $\hat{\omega}_t$,
\[
\left\|\frac{1}{p^0}Q(f,f)(t)\right\|_{H_{\mu,N}(\hat{\omega}_t)}
\leq C\|f(t)\|^2_{H_{\mu,N}(\hat{\omega}_t)}.
\]
To obtain $\mu-N$ regularity, the collision cross-section $S$ should 
satisfy suitable conditions, and by following the calculations of 
\cite{BCB73}, we can see that the following inequality also holds under the 
same conditions on $S$.
\begin{equation}\label{2.17}
\left\|\frac{1}{p^0}Q(f,g)(t)\right\|_{H_{\mu,N}(\hat{\omega}_t)}
\leq C\|f(t)\|_{H_{\mu,N}(\hat{\omega}_t)}
\|g(t)\|_{H_{\mu,N}(\hat{\omega}_t)},
\end{equation}
where $Q(f,g)$ is defined by
\[
Q(f,g)=\int_{\bbr^3}\int_{S^2}S(x,p,q,\Omega)\Big(
f(p')g(q')-f(p)g(q)\Big)\,d\Omega\,\frac{|g|^{\frac{1}{2}}}{-q_0}\,dq.
\]

In the previous section, we obtained a new estimate of $D^kp'^\alpha$ and 
$D^kq'^\alpha$ in Lemma 2.4. This leads to a corresponding new set
of conditions on $S$ which imply the $\mu-N$ regularity of $Q$.
Note that high order derivatives $D^k_{x,p}\big[f(x,p')\big]$ are linear 
combinations of the following quantities.
\begin{equation}
(D^if)(x,p')(D^{j_1}p')\cdots(D^{j_i}p')\quad\mbox{{\rm with}}\quad
j_1+\cdots+j_i=|k|,\label{2.18}
\end{equation}
where $D^i$ denotes some differential operator $D^r_{x,p}$ satisfying $|r|=i$.
Lemma 2.4 gives the estimate
\[
\left|D^k_{x,p}\big[f(x,p')\big]\right|
\leq C\sum|(D^if)(x,p')|(p^0)^{i+|\bar{k}|}(q^0)^{4i+3|\bar{k}|}
\left|g^{(|\bar{k}|)}\right|,
\]
where the sum is over all the possible $i$ satisfying \eqref{2.18}.
Consider the following quantity which arises from estimation of the gain term 
$Q_+$.
Let the multi-indices $k$, $r$, and $s$ satisfy $k+r+s=l$ with $|l|\leq\mu$.
\begin{align*}
&\hspace{-0.5cm}\left(\frac{1}{p^0}
\iint (D^k_{x,p}S) D^r_{x,p}\big[f(p')\big] D^s_{x,p}\big[f(q')\big]
\,d\Omega\,\frac{|g|^{\frac{1}{2}}}{-q_0}\,dq\right)^2\\
&\leq C\sum
\left(\iint |D^k_{x,p}S|
|D^if(p')|(p^0)^{i+|\bar{r}|}(q^0)^{4i+3|\bar{r}|}\left|g^{(|\bar{r}|)}\right|\right.\\
&\hspace{4.5cm}\left.\times
|D^jf(q')|(p^0)^{j+|\bar{s}|}(q^0)^{4j+3|\bar{s}|}\left|g^{(|\bar{s}|)}\right|
\,d\Omega\,\frac{1}{p_0q_0}\,dq\right)^2\\
&\leq C\sum\iint h_l^{-2}(p) h_i^2(p')|D^if(p')|^2\left|g^{(|\bar{r}|)}\right|^2
h_j^2(q')|D^jf(q')|^2\left|g^{(|\bar{s}|)}\right|^2
d\Omega\,\frac{p'_0q'_0}{p_0q_0}\,dq\\
&\hspace{1.5cm}\times\iint |D^k_{x,p}S|^2
h_l^2(p)h_i^{-2}(p')h_j^{-2}(q')(p^0)^{2(i+j)+2|\overline{r+s}|}
(q^0)^{8(i+j)+6|\overline{r+s}|}\,d\Omega\,
\frac{1}{p_0'q_0'p_0q_0}\,dq,
\end{align*}
and if the second integral is bounded, then we multiply the first integral
by the weight function $h_l^2(p)$ and integrate it over $x$ and $p$ with 
$\frac{p_0'q_0'}{p_0q_0}\,dp\,dq=dp'\,dq'$,
to obtain the inequality \eqref{2.17} when $\mu\geq 5$. Hence, the 
boundedness of the second integral should be the condition on $S$ for the 
$\mu-N$ regularity of $Q$. The second integral is estimated as follows. In the 
case where $N$ is finite,
\begin{align}
&\hspace{-0.5cm}\iint |D^k_{x,p}S|^2
h_l^2(p)h_i^{-2}(p')h_j^{-2}(q')(p^0)^{2(i+j)+2|\overline{r+s}|}
(q^0)^{8(i+j)+6|\overline{r+s}|}\,d\Omega\,
\frac{1}{p_0'q_0'p_0q_0}\,dq\cr
&\leq C\iint |D^k_{x,p}S|^2
\frac{(p^0)^{N+2|\hat{l}|}}{(p'^0)^{N+2\hat{i}}(q'^0)^{N+2\hat{j}}}
(p^0)^{2|r+s|+2|\overline{r+s}|-2}(q^0)^{14|r+s|-1}
\,d\Omega\,dq\cr
&\leq C\iint |D^k_{x,p}S|^2(p^0)^{2|\hat{l}|}
(p^0)^{2|r+s|+2|\overline{r+s}|-2}(q^0)^{14|r+s|-1}
\,d\Omega\,dq\cr
&\leq C\iint |D^k_{x,p}S|^2(p^0)^{2|\hat{k}|}
(p^0)^{4|r+s|-2}(q^0)^{14|r+s|-1}
\,d\Omega\,dq,\label{2.19}
\end{align}
where we used $-p_0\leq Cp'_0q'_0$ and $p^0\leq Cp'^0q'^0$.
For the case $N=\infty$,
\begin{align}
&\hspace{-0.5cm}\iint |D^k_{x,p}S|^2
h^2(p)h^{-2}(p')h^{-2}(q')(p^0)^{2(i+j)+2|\overline{r+s}|}
(q^0)^{8(i+j)+6|\overline{r+s}|}\,d\Omega\,
\frac{1}{p_0'q_0'p_0q_0}\,dq\cr
&\leq C\iint |D^k_{x,p}S|^2h^{-2}(q)(p^0)^{4|r+s|-2}(q^0)^{14|r+s|-1}
\,d\Omega\,dq,\label{2.20}
\end{align}
where we used $p'^0+q'^0=p^0+q^0$ and $-p_0\leq Cp'_0q'_0$.
The boundedness of the integrals \eqref{2.19} and \eqref{2.20} are the 
conditions on $S$ for the gain term. The arguments for the loss term are much 
simpler, and it is easily shown that the condition for the gain term
implies that for the loss term.
Consequently, we obtain the following lemma.
\begin{lemma}\label{Sbound}{\rm\cite{BCB73}}
Let $\mu\geq 5$, and suppose that the collision cross-section $S$ satisfies 
the following conditions for multi-indices $k+r=l$ satisfying $|l|\leq\mu$.
\begin{align*}
\iint |D^k_{x,p}S|^2(q^0)^{14|r|-1}\,d\Omega\,dq\leq C(p^0)^{2-4|r|-2|\hat{k}|}\quad
&\mbox{for}\quad N<\infty,\\
\iint |D^k_{x,p}S|^2 e^{-2q^0}(q^0)^{14|r|-1}\,d\Omega\,dq\leq C(p^0)^{2-4|r|}\quad
&\mbox{for}\quad N=\infty.
\end{align*}
Then, the collision operator $Q$ satisfies $\mu-N$ regularity.
\end{lemma}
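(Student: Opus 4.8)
The plan is to reverse-engineer the chain of inequalities \eqref{2.18}--\eqref{2.20} that precedes the statement, packaging the estimates already derived into a clean sufficient condition on $S$. The key external inputs are Lemma 2.4 (the bound $|D^k p'^\alpha|+|D^k q'^\alpha|\leq C(p^0)^{1+|\bar k|}(q^0)^{4+3|k|}|g^{(|\bar k|)}|$) and the observation \eqref{2.18} that $D^k_{x,p}[f(x,p')]$ is a linear combination of terms $(D^i f)(x,p')(D^{j_1}p')\cdots(D^{j_i}p')$ with $\sum j_\ell=|k|$. Combining these gives the pointwise estimate $|D^k_{x,p}[f(x,p')]|\leq C\sum_i |(D^i f)(x,p')|(p^0)^{i+|\bar k|}(q^0)^{4i+3|\bar k|}|g^{(|\bar k|)}|$ displayed just before the gain-term computation, and the same for $q'$.

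First I would expand $\frac{1}{p^0}Q_+(f,g)$ using the Leibniz rule, so that differentiating the integrand of order $|l|\leq\mu$ produces a sum of terms $(D^k_{x,p}S)\,D^r_{x,p}[f(p')]\,D^s_{x,p}[g(q')]$ with $k+r+s=l$; by polarization it suffices to treat $g=f$. Next I would apply Cauchy--Schwarz in $(q,\Omega)$, splitting off a factor carrying the weights $h_i(p')h_j(q')$ on $D^if(p')$, $D^jf(q')$ together with the Jacobian $p'_0 q'_0/(p_0 q_0)$ for the change of variables $dp'\,dq'=\frac{p'_0 q'_0}{p_0 q_0}dp\,dq$, and a complementary factor containing $|D^k_{x,p}S|^2$, the compensating inverse weights $h_i^{-2}(p')h_j^{-2}(q')$, and the powers of $p^0,q^0$ from Lemma 2.4. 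The first factor, after multiplying by $h_l^2(p)$ and integrating in $x,p$, is controlled by $\|f\|_{H_{\mu,N}}^2$ (using $|g^{(|\bar r|)}|,|g^{(|\bar s|)}|\leq C$ and $|l|\leq\mu$, together with $|\bar i|\leq|i|\leq\mu$, which needs $\mu\geq 5$ so that Sobolev embedding and the product estimates close). The second factor is exactly the ``second integral'' whose boundedness is to become the hypothesis; using $-p_0\leq Cp'_0 q'_0$, $p^0\leq Cp'^0 q'^0$ (Lemma 2.1 for the $N<\infty$ case) and $p'^0+q'^0=p^0+q^0$ (for $N=\infty$) to absorb the weights on the primed variables, it reduces to $\iint |D^k_{x,p}S|^2(p^0)^{2|\hat k|}(p^0)^{4|r+s|-2}(q^0)^{14|r+s|-1}\,d\Omega\,dq$, respectively with an extra $e^{-2q^0}$ when $N=\infty$. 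Setting $r:=r+s$ (so $k+r=l$) yields precisely the two displayed conditions, after noting $|r+s|\le|l|\le\mu$ and that the worst exponents of $q^0$ arise when $i+j$ is maximal, i.e. $|r+s|$. Finally I would observe that the loss term $Q_-=f(p)\iint S f(q)\,d\Omega\frac{|g|^{1/2}}{-q_0}dq$ is handled the same way but more easily, since no change of variables is needed and the $p'$-derivative estimates are not invoked; one checks directly that the stated bound on $S$ (in fact a weaker one) suffices, so the gain-term condition dominates.

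The main obstacle I anticipate is bookkeeping rather than conceptual: tracking the exact powers of $p^0$ and $q^0$ through the Leibniz expansion, the $i$-fold products of momentum derivatives in \eqref{2.18}, the Cauchy--Schwarz split, and the Jacobian, so that the final exponents $2-4|r|-2|\hat k|$ (resp. $2-4|r|$) on $(p^0)$ and $14|r|-1$ on $(q^0)$ come out as stated. One must be careful that the multi-index $i$ in \eqref{2.18} ranges up to $|r|$, so the exponent $4i+3|\bar r|$ is maximized at $i=|r|$ giving the power $4|r|$ of $p^0$ and, paired with the analogous contribution from $q'$ together with $q^0\leq C$ estimates being unavailable, the $14|r|$ of $q^0$ after using the crude bound $(q^0)^{4i+3|\bar r|+4j+3|\bar s|}\le (q^0)^{7|r+s|}$ inside the squared integrand, i.e. $(q^0)^{14|r+s|}$; the $-1$ comes from the single power of $q_0$ in $\frac{|g|^{1/2}}{-q_0}$ that is not absorbed. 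Since all the pointwise ingredients are already in hand from Lemmas 2.1 and 2.4 and the algebraic identity \eqref{2.18}, and since the paper explicitly attributes the conclusion to \cite{BCB73}, the proof is essentially an organized recapitulation of the computations \eqref{2.19}--\eqref{2.20} together with the remark that the loss term imposes no additional constraint.
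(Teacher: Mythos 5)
Your proposal follows essentially the same route as the paper: the paper's ``proof'' of Lemma \ref{Sbound} is precisely the computation preceding it, namely the Leibniz expansion via \eqref{2.18}, the pointwise bound from Lemma 2.4, the Cauchy--Schwarz split with the Jacobian $dp'\,dq'=\frac{p_0'q_0'}{p_0q_0}\,dp\,dq$, the reductions \eqref{2.19}--\eqref{2.20} using $-p_0\leq Cp_0'q_0'$, $p^0\leq Cp'^0q'^0$ and $p'^0+q'^0=p^0+q^0$, and the closing remark that the loss term is easier. Your reconstruction, including the exponent bookkeeping for $(p^0)^{2-4|r|-2|\hat k|}$ and $(q^0)^{14|r|-1}$, matches the paper's argument.
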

\begin{remark}
Note that the condition on $S$ for finite $N$ is stronger than that
for the case $N=\infty$, and that the former condition implies the latter one.
\begin{align*}
&\hspace{-0.5cm}\iint |D^k_{x,p}S|^2 e^{-2q^0}(q^0)^{14|r|-1}\,d\Omega\,dq\\
&\leq\iint |D^k_{x,p}S|^2(q^0)^{14|r|-1}\,d\Omega\,dq
\leq C(p^0)^{2-4|r|-2|\hat{k}|}
\leq C(p^0)^{2-4|r|}.
\end{align*}
In other words, if the collision cross-section $S$ satisfies the condition
for finite $N$, then it also satisfies the condition for the case $N=\infty$. 
\end{remark}

\section{The non-negativity problem for the Einstein-Boltzmann system}
\setcounter{equation}{0}
We now consider the EB system and the question of the non-negativity of
its solutions for non-negative initial data. The Cauchy problem for the EB 
system was first studied by Bancel and Choquet-Bruhat \cite{B73,BCB73}.
We apply Lemma 2.5 to the results of \cite{B73,BCB73} to obtain
the following existence theorem.
\begin{theorem}{\rm\cite{B73,BCB73}}
Suppose that Cauchy data are prescribed as
\[
g_{\alpha\beta}(0)\in H_{\mu+1}(\omega_0),\quad
\partial_0 g_{\alpha\beta}(0)\in H_{\mu}(\omega_0),\quad
f(0)\in H_{\mu,N}(\hat{\omega}_0),
\]
such that
\[
|g_{\alpha\beta}(0)-\eta_{\alpha\beta}|\leq\varepsilon-\delta,\quad\delta>0.
\]
If $\mu\geq 5$, $N\geq 6$, and the collision cross-section $S$ satisfies the 
conditions of Lemma 2.5,
then there exist a domain $\Omega$ in $\bbr^4$, which admits $\omega_0$ as a 
Cauchy surface,
and a function $f$ on $\hat{\Omega}$ such that
\begin{itemize}
\item[1.] $g_{\alpha\beta}\in H_{\mu+1}(\Omega)$ and $f\in H_{\mu,N}(\hat{\Omega})$.
\item[2.] $g_{\alpha\beta}$ satisfies 
$|g_{\alpha\beta}-\eta_{\alpha\beta}|\leq\varepsilon$ on $\Omega$.
\item[3.] $g_{\alpha\beta}$ and $f$ satisfy the EB system.
\item[4.] $g_{\alpha\beta}$ and $f$ induce the prescribed Cauchy data
on $\omega$ and $\hat{\omega}$ respectively.
\end{itemize}
This solution is unique in $\Omega$ and depends continuously on the Cauchy data.
\end{theorem}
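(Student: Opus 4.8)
The plan is to follow the iteration scheme of Bancel and Choquet-Bruhat \cite{B73,BCB73}, with the sole modification that the conditions imposed there on the collision cross-section $S$ are replaced by those of Lemma \ref{Sbound}; the entire point of Sections 2.1--2.3 is that, with the parametrization \eqref{2.4} and the derivative estimate of Lemma 2.4, these new conditions still deliver the $\mu-N$ regularity of $Q$ that is the key analytic input of \cite{BCB73}. Concretely, I would work on a slab $\Omega=\omega_0\times[0,T]$ with $T$ small and set up the two decoupled subproblems. Given a metric $g_{\alpha\beta}$ with $|g_{\alpha\beta}-\eta_{\alpha\beta}|\le\varepsilon$ on $\Omega$, the Boltzmann equation \eqref{1.2}, divided by $p^0$, is a linear transport equation in $(x,p)$ with a quadratic right-hand side; one solves it by integrating along the characteristics of the vector field $(p^\alpha/p^0)\,\partial_{x^\alpha}-(\Gamma^i_{\alpha\beta}p^\alpha p^\beta/p^0)\,\partial_{p^i}$ and uses the bilinear estimate \eqref{2.17} (valid by Lemma \ref{Sbound} once $\mu\ge5$) together with a Gronwall argument to obtain, for $T$ small, a solution $f\in H_{\mu,N}(\hat{\Omega})$ with $\|f\|_{H_{\mu,N}(\hat{\Omega})}$ controlled by $\|f(0)\|_{H_{\mu,N}(\hat{\omega}_0)}$. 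Conversely, given $f$, one forms $T^{\alpha\beta}$ by the momentum integral \eqref{1.4}; here the role of $N\ge6$ is to make the $p$-integrals of $f$ and of its first $\mu$ derivatives converge and to bound $\|T^{\alpha\beta}\|_{H_\mu(\Omega)}$ by $\|f\|_{H_{\mu,N}(\hat{\Omega})}^2$, after which the harmonic-gauge Einstein equations \eqref{1.1} form a quasilinear wave system whose standard energy estimates in $H_{\mu+1}$ produce a metric of the required regularity.

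Next I would check that the $\varepsilon$-smallness is propagated. Since $\mu\ge5>3/2$, Sobolev embedding in the three spatial dimensions controls $g_{\alpha\beta}$ and $\partial_\gamma g_{\alpha\beta}$ in $L^\infty$, so starting from $|g_{\alpha\beta}(0)-\eta_{\alpha\beta}|\le\varepsilon-\delta$ the bound $|g_{\alpha\beta}-\eta_{\alpha\beta}|\le\varepsilon$ persists for $T$ small; this is exactly the hypothesis required for Lemmas 2.1, 2.3 and 2.4, and hence for Lemma \ref{Sbound}, to hold uniformly on $\Omega$, so the Boltzmann estimates of the previous paragraph are legitimate throughout the iteration. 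One then iterates, producing $(g^{(n)},f^{(n)})$ by alternately solving the two subproblems, shows they stay in a fixed ball of $H_{\mu+1}(\Omega)\times H_{\mu,N}(\hat{\Omega})$ with the smallness preserved, and proves contraction in a norm one order lower (to avoid losing derivatives) using the difference form of \eqref{2.17} for the Boltzmann part and an energy estimate for the difference of metrics for the wave part. The limit $(g_{\alpha\beta},f)$ solves the EB system, induces the prescribed Cauchy data, lies in the stated spaces by weak-$*$ compactness and lower semicontinuity of the norms, and satisfies items 1--4; uniqueness and continuous dependence follow by applying the same difference estimates to two solutions.

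The main obstacle is the interplay of the weighted momentum norms with the coupling: one must arrange that simultaneously (a) $Q$ maps $H_{\mu,N}$ into itself after dividing by $p^0$, which is Lemma \ref{Sbound} and is the place where the new nonsingular estimate of Lemma 2.4 replaces the estimate of \cite{BCB73} containing the factors $\lambda^{-2|r_1|}\varrho^{-2|r_2|}|p\times q|^{-|\hat{s}|}$, and (b) the stress-energy integral \eqref{1.4} and its first $\mu$ derivatives converge and are quadratically bounded by $\|f\|_{H_{\mu,N}}$, which forces $N\ge6$. Verifying that these requirements are compatible, and that the characteristic flow of the transport operator distorts the weights $(p^0)^{N/2+|\hat{k}|}$ by factors depending only on the $\varepsilon$-bound (so that the Gronwall constants are uniform in the iteration), is the technical heart; the remainder is the routine quasilinear-wave plus linear-transport fixed-point machinery already present in \cite{BCB73}, so the theorem is essentially that of \cite{B73,BCB73} with Lemma \ref{Sbound} substituted for their conditions on $S$.
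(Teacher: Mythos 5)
Your proposal is correct and follows essentially the same route as the paper, which gives no independent proof of this theorem but obtains it by running the Bancel--Choquet-Bruhat iteration of \cite{B73,BCB73} verbatim with the conditions of Lemma \ref{Sbound} substituted for their original hypotheses on $S$ --- exactly the substitution you describe, with the $\mu-N$ regularity \eqref{2.17} (resting on Lemma 2.4 in place of the singular estimate involving $\lambda^{-2|r_1|}\varrho^{-2|r_2|}|p\times q|^{-|\hat{s}|}$) as the key analytic input and $N\ge 6$ ensuring convergence of the stress-energy integrals. The only point worth flagging is one the authors themselves raise later (Section 4): the contraction step requires estimating the change in the collision term when the metric is varied between iterates (through $|g|^{1/2}/(-q_0)$, $p'$, $q'$, $\lambda$, $\varrho$), an estimate omitted in \cite{BCB73} and not spelled out in your difference argument either, so your reconstruction is faithful to the cited proof including that lacuna.
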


In this section we consider the non-negativity of the distribution function
constructed in the above theorem. Note that we already have a solution 
$(f,g_{\alpha\beta},\Omega)$ of the EB system, while the non-negativity problem 
concerns only the distribution function $f$. Hence, it is enough to consider 
the Boltzmann equation on a given curved spacetime and show non-negativity in 
this case. Conditions for the non-negativity of solutions of the Boltzmann
equation on a given spacetime have been given by Bichteler \cite{B67} and 
Tadmon \cite{T10}. We will investigate to what extent the arguments for 
non-negativity given in \cite{B67,T10} apply to the solutions of the EB 
system constructed in \cite{B73,BCB73}.

Theorem 3.1 holds for any weighted spaces with $N\geq 6$, but we will only 
consider the case $N=\infty$, i.e., the exponentially bounded case.
Firstly, any functions that have polynomial decay rates can be approximated by
functions having exponential decay rates. Hence, non-negativity of the case
$N=\infty$ implies the corresponding result in the case of finite $N$ by 
suitable approximations. Secondly, as we checked in Remark 2.2, a collision 
cross-section which satisfies the condition for the case of finite $N$ 
also satisfies the condition for the case $N=\infty$. So, any property that 
holds under the condition for the case $N=\infty$ holds under the condition
for the case of finite $N$.

\subsection{The result of Tadmon}
The original method for the non-negativity problem introduced by Lu and Zhang 
\cite{LZ01} has recently been applied to the general relativistic case by 
Tadmon \cite{T10}. A curved spacetime was assumed to be given, and mild 
solutions of the Boltzmann equation were considered, which are defined as 
follows. Consider the vector field 
$(p^\alpha,-\Gamma^\alpha_{\beta\gamma}p^\beta p^\gamma)$ 
as in \cite{B67}, and parametrize its integral curves by $x^0=s$ on the mass 
hyperboloid. Let $(X(s),P(s))\in\bbr^3\times\bbr^3$ denote an integral curve 
with $X^i(0)=x^i$ and $P(0)=p$ parametrized by $s$. This has the physical 
interpretation of a particle path. Along this curve, the 
Boltzmann equation is written as follows.
\begin{equation}\label{3.5}
f(t,X(t),P(t))=f(0)+\int_0^t K(f)(s,X(s),P(s))\,ds,
\end{equation}
where $f(0)$ is an initial datum evaluated at $(x^i,p)$, and $K$ denotes
\[
K(f)(x,p)=\frac{1}{p^0}Q(f,f)(x,p).
\]
Mild solutions are now defined as follows:
{\it a function $f$ is called a mild solution of the Boltzmann equation with 
measurable initial value $f(0)$,
if $f$ is measurable, $K(f)$ is $L^1_{{\rm loc}}(\bbr_+)$ along the integral 
curves, and \eqref{3.5} holds.}

For the rest of this section, $x$ will denote three-dimensional vector
consisting of the spatial components $x^i$ where 
\[
x^\alpha=(x^0,x^i)=(t,x)\in\bbr_+\times\bbr^3,
\]
and the distribution function will be written as $f(t,x,p)$ instead of $f(x,p)$.
The main theorem of \cite{T10} can be stated as follows.
\begin{theorem}
Let $f$ be a mild solution of the Boltzmann equation with a non-negative 
initial datum $f(0)$. Assume the following conditions.
\begin{align*}
\mbox{\rm(i)}\quad&\frac{S(t,x,p,q,\Omega)}{p^0q_0}
=\frac{S(t,x,q,p,\Omega)}{q^0p_0},\\
\mbox{\rm(ii)}\quad&\frac{S(t,x,p',q',\Omega)}{p'^0q'_0}
=\frac{\partial(p,q)}{\partial(p',q')}\frac{S(t,x,p,q,\Omega)}{p^0q_0},\\
\mbox{\rm(iii)}\quad&{\rm ess}\sup_{x,p}\iint_{\bbr^3\times S^2}
\frac{S\, q^0}{(p^0)^2q_0}|f(t,x,q)||g|^{\frac{1}{2}}\,d\Omega\,dq
=:\alpha(t)\in L^1_{{\rm loc}}(\bbr_+),\\
\mbox{\rm(iv)}\quad&{\rm ess}\sup_{y,v}
\frac{|l(t,y,v)|J(y,v)}{v^0}=:\beta(t)\in L^1_{{\rm loc}}(\bbr_+),
\end{align*}
where $l$ is defined by
\[
l(s,X(s),P(s)):=\partial_s\Big[P^0(s)J^{-1}(X(s),P(s))\Big]
\]
with $J$, the Jacobian for
\[
dx\,dp=\left|\frac{\partial(x,p)}{\partial(X(s),P(s))}\right|\,dX(s)\,dP(s)
=:J(X(s),P(s))\,dX(s)\,dP(s).
\]
Then, $f$ is non-negative.
\end{theorem}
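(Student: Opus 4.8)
The plan is to adapt the Lu--Zhang argument \cite{LZ01}: recast the mild formulation \eqref{3.5} along characteristics in an integrating-factor form, and then run a Gronwall estimate on the negative part $f_-:=\max\{-f,0\}$, with (i)--(iv) keeping every auxiliary quantity locally integrable in time. Introduce the collision frequency and its modulus version,
\[
\nu[f](t,x,p)=\iint S(t,x,p,q,\Omega)\,f(t,x,q)\,d\Omega\,\frac{|g|^{1/2}}{-q_0}\,dq,\qquad
\bar\nu[f]=\iint S\,|f(t,x,q)|\,d\Omega\,\frac{|g|^{1/2}}{-q_0}\,dq,
\]
so that $Q_-(f,f)=f\,\nu[f]$. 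Condition (iii) ensures that along each characteristic $s\mapsto(X(s),P(s))$ the map $s\mapsto\bar\nu[f]/P^0$ is in $L^1_{\rm loc}(\bbr_+)$; hence $\bar\Lambda(t):=\int_0^t(\bar\nu[f]/P^0)(s,X(s),P(s))\,ds$ is a well-defined nondecreasing function, and \eqref{3.5} is equivalent to
\[
f(t,X(t),P(t))\,e^{\bar\Lambda(t)}=f(0)+\int_0^t e^{\bar\Lambda(s)}\left(\frac{Q_+(f,f)}{P^0}+f\,\frac{\bar\nu[f]-\nu[f]}{P^0}\right)(s,X(s),P(s))\,ds .
\]
Using $\bar\nu$ instead of $\nu$ is the point: $\bar\Lambda$ is monotone, so $e^{\bar\Lambda(s)-\bar\Lambda(t)}\le1$ for $0\le s\le t$ regardless of the sign of $f$, which is what makes the estimate below unconditional.

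Next I would bound the right-hand side below by quantities \emph{linear} in $f_-$. Since $\bar\nu[f]-\nu[f]=2\iint S f_-(q)\,d\Omega\,\tfrac{|g|^{1/2}}{-q_0}\,dq\ge0$ and $f\ge-f_-$, the second term is at least $-2f_-\iint S f_-(q)\,d\Omega\,\tfrac{|g|^{1/2}}{-q_0}\,dq$, while $f(p')f(q')\ge-\big(|f(p')|f_-(q')+f_-(p')|f(q')|\big)$ bounds the gain term below. Since $f(0)\ge0$ and the exponential factor is at most $1$, this yields the pointwise estimate
\[
f_-(t,X(t),P(t))\le\int_0^t\frac{1}{P^0}\iint S\Big(|f(p')|f_-(q')+f_-(p')|f(q')|+2f_-f_-(q)\Big)\,d\Omega\,\frac{|g|^{1/2}}{-q_0}\,dq\,ds .
\]

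To close the argument, integrate against $dx_0\,dp_0$ after pulling back to the initial slice: $\Phi(t):=\int f_-(t,y,v)\,dy\,dv=\int f_-(t,X(t),P(t))\,J^{-1}(X(t),P(t))\,dx_0\,dp_0$, insert the pointwise bound, apply Fubini, and change variables back to the time-$s$ slice via $dx_0\,dp_0=J(X(s),P(s))\,dX(s)\,dP(s)$, so that the ratio $J(X(s),P(s))/J(X(t),P(t))$ appears. Setting $m(s):=P^0(s)J^{-1}(X(s),P(s))$ one has $m'(s)=l(s,X(s),P(s))$, and condition (iv), which reads $|l|J/v^0\le\beta(s)\in L^1_{\rm loc}$, i.e.\ $|m'|\le\beta\,m$, gives by Gronwall that $m(t)/m(s)$ stays bounded on bounded time intervals; combined with the boundedness of $1/P^0$ along the flow this controls the Jacobian ratio. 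Inside the collision integral the two gain-term pieces are treated by the change of variables $(p',q')\mapsto(p,q)$ at fixed $\Omega$: condition (ii) says precisely that the measure $\tfrac{S}{p^0q_0}\,dp\,dq$ is invariant under this involution, which turns $|f(p')|f_-(q')$ into $|f(p)|f_-(q)$; integrating out $q$ and absorbing the surviving factor $|f(p)|$ into the essential supremum $\alpha(s)$ of condition (iii) — with (i) putting the two pieces on an equal footing, and the quadratic term $f_-f_-(q)$ estimated the same way — one obtains $\Phi(t)\le C\int_0^t\alpha(s)\,\Phi(s)\,ds$ with $\alpha\in L^1_{\rm loc}(\bbr_+)$. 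Gronwall's inequality then forces $\Phi\equiv0$, hence $f_-\equiv0$ and $f\ge0$.

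The main obstacle is the double change of variables in this last step: one must carry the pointwise mild bound along the characteristic flow, which introduces the flow Jacobian $J$ controlled by (iv), and simultaneously substitute $(p',q')\mapsto(p,q)$ inside the collision integral, controlled by (ii) with its on-shell Jacobian $\partial(p,q)/\partial(p',q')$, and make the two fit together so that a single Gronwall inequality with $L^1_{\rm loc}$ coefficient results. A subsidiary point is justifying the passage from \eqref{3.5} to its integrating-factor form within the regularity of a mild solution, which relies on $\bar\nu[f]/P^0,\nu[f]/P^0\in L^1_{\rm loc}$ along characteristics from (iii). (A monotone-iteration variant — constructing nonnegative approximants $f_{n+1}$ from $f_n$ by the analogue of the displayed identity and passing to the limit — would reach the same conclusion.) Everything else is routine.
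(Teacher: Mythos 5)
This statement is quoted, not proved, in the paper: it is the main theorem of Tadmon \cite{T10}, reproduced verbatim, and the paper's own contribution in Section 3.1 is only to verify that hypotheses (i)--(iv) hold for the Bancel--Choquet-Bruhat solutions (with the extra condition $g_{0i}=0$ arranged by a coordinate change). So there is no in-paper proof to compare against; your proposal is a reconstruction of the Lu--Zhang argument \cite{LZ01} that Tadmon adapts, and its architecture --- integrating factor built from the absolute-value collision frequency $\bar\nu[f]$ so that the exponential weight is monotone, lower bounds linear in $f_-$ for the gain term and for $f(\bar\nu-\nu)$, integration over phase space with the flow Jacobian $J$ controlled via (iv) through $m(s)=P^0(s)J^{-1}(X(s),P(s))$ and $|m'|\le\beta m$, the involution $(p',q')\mapsto(p,q)$ justified by (ii), and a final Gronwall with coefficient $\alpha\in L^1_{\rm loc}$ from (iii) --- is the correct skeleton and assigns each hypothesis its intended role. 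The one place where you have named rather than closed the gap is the $p^0$ bookkeeping: condition (iii) controls $\iint S\,q^0|f|\,|g|^{1/2}/((p^0)^2(-q_0))\,d\Omega\,dq$, which differs from $\bar\nu[f]/p^0$ by a factor of order $p^0/q^0$, and the surplus power of $p^0$ must be absorbed by the Jacobian ratio $J(X(s),P(s))/J(X(t),P(t))=\bigl(P^0(s)/P^0(t)\bigr)\bigl(m(t)/m(s)\bigr)$; this cancellation is precisely why (iii) and (iv) carry the weights $(p^0)^{-2}$ and $P^0J^{-1}$, and it needs to be carried out explicitly (possibly by Gronwalling a $p^0$-weighted integral of $f_-$ rather than $\int f_-\,dy\,dv$ itself) before the argument is complete. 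You should also be aware that the paper deliberately offers, in Section 3.2, a much simpler alternative route to non-negativity for the solutions of Theorem 3.1 --- perturb the equation by $\eta e^{-|p|^2}$, observe that the perturbed solution cannot first touch zero with nonpositive derivative along a particle path, and pass to the limit $\eta\to0$ using the $\mu$--$N$ regularity estimate --- which avoids hypotheses (i)--(iv) altogether.
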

\begin{remark}
The conditions on $\alpha(t)$ and $\beta(t)$ should be replaced by
$\alpha,\beta\in L^1_{{\rm loc}}(I)$ for some finite interval $I$, because the 
solutions in Theorem 3.1 are local in time.
Hence, it is enough to show the boundedness of $\alpha$ and $\beta$ on $I$.
\end{remark}

In this part, we will show that, with an extra assumption, the solution of 
Theorem 3.1 satisfies the conditions (i)--(iv). Since, as explained below,
the extra condition can be arranged by making a coordinate change, 
non-negativity for the EB case is ensured. We remark that the strong solutions 
in Theorem 3.1 satisfy the definition of mild solutions.\bigskip

\noindent{\bf Non-negativity of $f$.}
We consider the conditions (i)--(iv) separately in the following.\bigskip

\noindent{\bf (i)--(ii)}
The collision cross-section $S$ is given by \eqref{2.1}, which implies
\[
S(t,x,p,q,\Omega)=S(t,x,q,p,\Omega)=S(t,x,p',q',\Omega),
\]
and it is well known that the Jacobian in (ii) is given by
\[
\frac{\partial(p,q)}{\partial(p',q')}=\frac{p_0q_0}{p'_0q'_0}.
\]
Hence, the conditions (i)--(ii) can be written as
$p_0/p^0=q_0/q^0=p'_0/p'^0$ for any $p$ and $q$.
However, this leads to
\begin{equation}\label{3.6}
g_{0i}=0,
\end{equation}
which is not assumed in Theorem 3.1 and is the extra condition referred to
above.\bigskip

\noindent{\bf (iii)} For the metric $g_{\alpha\beta}$ given in Theorem 3.1,
its determinant $|g|$ is bounded, and $q_0$ and $q^0$ are equivalent.
Therefore, (iii) is estimated as follows.
\begin{align*}
\alpha(t)&\leq C\frac{1}{(p^0)^2}\iint S\,|f(t,x,q)|\,d\Omega\,dq\\
&\leq C\frac{1}{(p^0)^2}\left(\iint S^2 e^{-2q^0}\,d\Omega\,dq\right)^{\frac{1}{2}}
\left(\iint e^{2q^0}|f(t,x,q)|^2\,d\Omega\,dq\right)^{\frac{1}{2}},
\end{align*}
where $e^{q^0}$ is the weight function for $N=\infty$ case.
Since $f\in H_{\mu,N}(\hat{\Omega})$ with $\mu\geq 5$, the second integral 
above is bounded.
The condition on $S$ given in Lemma 2.5 implies the boundedness of the first 
integral, and this shows that the condition (iii) holds in Theorem 3.1.\bigskip

\noindent{\bf (iv)} As for the last condition, we recall that the integral 
curve $(X(s),P(s))$
is defined as follows. Let $X(s)$ and $P(s)$ denote for each $x$ and $p$,
\[
X(s)=X(s,0,x,p),\quad P(s)=P(s,0,x,p),
\]
and they satisfy
\begin{equation}\label{3.7}
\left\{
\begin{aligned}
&\partial_sX^i(s)=\frac{P^i(s)}{P^0(s)},\quad X^i(0)=x^i,\\
&\partial_sP^i(s)=-\Gamma^i_{\alpha\beta}
\frac{P^\alpha(s)P^\beta(s)}{P^0(s)},\quad P^i(0)=p^i,
\end{aligned}
\right.
\end{equation}
where the Christoffel symbols are evaluated at $X^\alpha(s)=(s,X(s))$.
We first consider the quantity
\begin{align*}
\partial_s\Big[P^0(s)\Big]
&=(\partial_{x^\alpha}P^0)(s)(\partial_sX^\alpha(s))
+(\partial_{p^i}P^0)(s)(\partial_sP^i(s))\\
&=-\frac{P^\beta(s)P^\gamma(s)}{2P_0(s)}(\partial_\alpha g_{\beta\gamma})
\frac{P^\alpha(s)}{P^0(s)}
+\frac{P_i(s)}{P_0(s)}\Gamma^i_{\alpha\beta}\frac{P^\alpha(s)P^\beta(s)}{P^0(s)},
\end{align*}
where we used 
$\partial_{x^\alpha}p^0=-p^\beta p^\gamma\partial_\alpha g_{\beta\gamma}/(2p_0)$
and $\partial_{p^k}p^0=-p_k/p_0$ as in the proof of Lemma 2.2 with the 
equations \eqref{3.7}.
Since $g_{\alpha\beta}\in H_{\mu+1}(\Omega)$ with $\mu\geq 5$,
$\partial_\alpha g_{\beta\gamma}$ and $\Gamma^i_{\alpha\beta}$ are bounded.
Lemma 2.1 then gives the estimate
\begin{equation}\label{3.8}
\partial_s\Big[P^0(s)\Big]
\leq C P^0(s).
\end{equation}

Consider now $\partial(X(s),P(s))/\partial (x,p)$.
We need the following calculations.
Differentiate the first equation of \eqref{3.7} with respect to $x^j$ to obtain
\begin{align*}
\partial_s\partial_{x^j}X^i(s)
&=\frac{\partial_{x^j}P^i(s)}{P^0(s)}-\frac{P^i(s)}{(P^0(s))^2}
\partial_{x^j}\Big[P^0(s)\Big]\\
&=\frac{\partial_{x^j}P^i(s)}{P^0(s)}-\frac{P^i(s)}{(P^0(s))^2}
\left(-\frac{P^\beta(s)P^\gamma(s)}{2P_0(s)}(\partial_\alpha g_{\beta\gamma})
(\partial_{x^j}X^\alpha(s))-\frac{P_i(s)}{P_0(s)}(\partial_{x^j}P^i(s))\right),
\end{align*}
and as a consequence we obtain the estimate
\begin{equation}\label{3.9}
\partial_s|\partial_{x}X(s)|
\leq C|\partial_xX(s)|+C\frac{1}{P^0(s)}|\partial_xP(s)|.
\end{equation}
Similarly we have
\begin{equation}\label{3.10}
\partial_s|\partial_{p}X(s)|
\leq C|\partial_pX(s)|+C\frac{1}{P^0(s)}|\partial_pP(s)|.
\end{equation}
Differentiate the second equation of \eqref{3.7} with respect to $x^j$ to 
obtain
\begin{align*}
\partial_s\partial_{x^j}P^i(s)
&=-(\partial_\gamma\Gamma^i_{\alpha\beta})(\partial_{x^j}X^\gamma(s))
\frac{P^\alpha(s)P^\beta(s)}{P^0(s)}
-2\Gamma^i_{\alpha\beta}\frac{P^\beta(s)}{P^0(s)}\partial_{x^j}\Big[P^\alpha(s)\Big]\\
&\quad +\Gamma^i_{\alpha\beta}\frac{P^\alpha(s)P^\beta(s)}{(P^0(s))^2}
\partial_{x^j}\Big[P^0(s)\Big].
\end{align*}
By similar arguments, we have
\begin{equation}\label{3.11}
\partial_s|\partial_xP(s)|\leq CP^0(s)|\partial_xX(s)|+C|\partial_xP(s)|,
\end{equation}
and similarly again
\begin{equation}\label{3.12}
\partial_s|\partial_pP(s)|\leq CP^0(s)|\partial_pX(s)|+C|\partial_pP(s)|.
\end{equation}
Due to \eqref{3.8}, the inequalities \eqref{3.9}--\eqref{3.12} can be combined 
as follows.
\begin{align*}
\partial_s\Big[|P^0(s)\partial_xX(s)|+|\partial_xP(s)|\Big]
&\leq C\Big(|P^0(s)\partial_xX(s)|+|\partial_xP(s)|\Big),\\
\partial_s\Big[|P^0(s)\partial_pX(s)|+|\partial_pP(s)|\Big]
&\leq C\Big(|P^0(s)\partial_pX(s)|+|\partial_pP(s)|\Big).
\end{align*}
We now use Gr{\"o}nwall's lemma with the conditions $X(0)=x$ and $P(0)=p$ to 
obtain
\begin{align*}
|P^0(s)\partial_xX(s)|+|\partial_xP(s)|
&\leq C\Big(|P^0(0)\partial_xX(0)|+|\partial_xP(0)|\Big)\leq Cp^0,\\
|P^0(s)\partial_pX(s)|+|\partial_pP(s)|
&\leq C\Big(|P^0(0)\partial_pX(0)|+|\partial_pP(0)|\Big)\leq C,
\end{align*}
which imply the estimates
\begin{equation}\label{3.13}
\begin{aligned}
&|\partial_xX(s)|\leq C\frac{p^0}{P^0(s)},\quad
|\partial_pX(s)|\leq C\frac{1}{P^0(s)},\\
&|\partial_xP(s)|\leq Cp^0,\quad
|\partial_pP(s)|\leq C.
\end{aligned}
\end{equation}
If we apply \eqref{3.13} to \eqref{3.9}--\eqref{3.12} again,
we can see that $s$-derivatives of the above quantities satisfy the same 
estimates.
\begin{equation}\label{3.14}
\begin{aligned}
&\partial_s|\partial_xX(s)|\leq C\frac{p^0}{P^0(s)},\quad
\partial_s|\partial_pX(s)|\leq C\frac{1}{P^0(s)},\\
&\partial_s|\partial_xP(s)|\leq Cp^0,\quad
\partial_s|\partial_pP(s)|\leq C
\end{aligned}
\end{equation}

On the other hand, we can consider the following integral curves.
Let $Y(s)$ and $V(s)$ denote for each $y$ and $v$,
\[
Y(s)=Y(s,t,y,v),\quad V(s)=V(s,t,y,v),
\]
and they satisfy
\begin{equation*}
\left\{
\begin{aligned}
&\partial_sY^i(s)=\frac{V^i(s)}{V^0(s)},\quad Y^i(t)=y^i,\\
&\partial_sV^i(s)=-\Gamma^i_{\alpha\beta}
\frac{V^\alpha(s)V^\beta(s)}{V^0(s)},\quad V^i(t)=v^i,
\end{aligned}
\right.
\end{equation*}
where the Christoffel symbols are evaluated at $Y^\alpha(s)=(s,Y(s))$.
Then, by the same arguments as above, we obtain
\begin{align*}
\partial_s\Big[|V^0(s)\partial_yY(s)|+|\partial_yV(s)|\Big]
&\leq C\Big(|V^0(s)\partial_yY(s)|+|\partial_yV(s)|\Big),\\
\partial_s\Big[|V^0(s)\partial_vY(s)|+|\partial_vV(s)|\Big]
&\leq C\Big(|V^0(s)\partial_vY(s)|+|\partial_vV(s)|\Big),
\end{align*}
and use Gr{\"o}nwall's lemma with $Y(t)=y$ and $V(t)=v$,
\begin{align*}
|V^0(s)\partial_yY(s)|+|\partial_yV(s)|
&\leq C\Big(|V^0(t)\partial_yY(t)|+|\partial_yV(t)|\Big)\leq Cv^0,\\
|V^0(s)\partial_vY(s)|+|\partial_vV(s)|
&\leq C\Big(|V^0(t)\partial_vY(t)|+|\partial_vV(t)|\Big)\leq C,
\end{align*}
and as a consequence the following estimates are obtained.
\begin{equation}\label{3.15}
\begin{aligned}
&|\partial_yY(s)|\leq C\frac{v^0}{V^0(s)},\quad
|\partial_vY(s)|\leq C\frac{1}{V^0(s)},\\
&|\partial_yV(s)|\leq Cv^0,\quad
|\partial_vV(s)|\leq C.
\end{aligned}
\end{equation}

We now consider the quantity
\[
l(s,X(s),P(s))=\partial_s\Big[P^0(s)\Big]J^{-1}(X(s),P(s))
+P^0(s)\partial_s\Big[J^{-1}(X(s),P(s))\Big],
\]
from which we have for $y=X(t)$ and $v=P(t)$,
\begin{equation}\label{3.16}
l(t,y,v)J(y,v)=\partial_s\Big[P^0(s)\Big]_{s=t}
+v^0J(y,v)\partial_s\Big[J^{-1}(X(s),P(s))\Big]_{s=t}.
\end{equation}
By applying \eqref{3.8} to the first quantity above, we obtain
\begin{equation}\label{3.17}
\left|\partial_s\Big[P^0(s)\Big]_{s=t}\right|\leq Cv^0.
\end{equation}
Note that $J(y,v)$ is written as 
\[
J(y,v)=J(X(s),P(s))_{s=t}=\left|
\frac{\partial(x,p)}{\partial(X(s),P(s))}\right|_{s=t}
=\left|\frac{\partial(Y(s),V(s))}{\partial(y,v)}\right|_{s=0},
\]
where we used
\[
x=Y(0,t,X(t),P(t))=Y(0,t,y,v)=Y(0)\quad\mbox{{\rm and similarly}}\quad
p=V(0).
\]
Thanks to the multilinearity of determinant, $J(y,v)$ is estimated by 
\eqref{3.15} as follows.
\begin{equation}\label{3.18}
J(y,v)\leq C\left(\frac{v^0}{V^0(s)}\right)^3_{s=0}
=C\left(\frac{v^0}{p^0}\right)^3.
\end{equation}
On the other hand, as for $J^{-1}(X(s),P(s))$, we have the same estimates for
$\partial_xX(s)$ and $\partial_s\partial_xX(s)$,
$\partial_pX(s)$ and $\partial_s\partial_pX(s)$, and so on, as in \eqref{3.13} 
and \eqref{3.14}.
Thanks to the multilinearity again, the following estimate is obtained.
\begin{equation}\label{3.19}
\left|\partial_s\Big[J^{-1}(X(s),P(s))\Big]\right|_{s=t}\leq
C\left(\frac{p^0}{P^0(s)}\right)^3_{s=t}=C\left(\frac{p^0}{v^0}\right)^3.
\end{equation}
We combine \eqref{3.17}--\eqref{3.19} and \eqref{3.16} to obtain
\[
|l(t,y,v)|J(y,v)\leq Cv^0.
\]
This gives boundedness of $\beta$ on $I$, and therefore it is proved that the 
condition (iv)
holds in Theorem 3.1.

It remains to discuss the extra condition $g_{0i}=0$ which is required to 
ensure that the conditions of Tadmon's theorem are satisfied. We are concerned 
here with a solution on a local region, where the components of the metric 
in a certain coordinate system are close to those of the Minkowski metric.
Let the original coordinates be $(t,x^i)$. Now choose new coordinates 
$(t,\tilde x^i)$ with the properties that $\tilde x^i$ agrees with $x^i$ for
$t=0$ and the $x^i$ are constant along the integral curves of the unit normal 
vector field to the family of hypersurfaces of constant $t$. Then the 
components of the metric in the new coordinate system satisfy the desired 
condition. The new coordinates might not be defined on exactly the same region
as the old ones so that non-negativity is obtained on a slightly smaller domain.
Given the fact that the non-negativity theorem being discussed here is local
in nature, this is not a major disadvantage.    

\subsection{The result of Bichteler and a new approach}
In Bichteler's paper \cite{B67}, it was proved that under certain hypotheses
a local solution of the Boltzmann equation exists on a given curved spacetime, 
and that it is non-negative. A four-dimensional manifold $M$ and a Lorentzian 
metric $g_{\alpha\beta}$ were assumed to be given, and the Boltzmann equation was 
written in the form
\begin{equation}\label{3.1}
\mathcal{L}_Xf=\iiint W(12\to 34)\,\delta(1+2-3-4)
\Big(f(3)f(4)-f(1)f(2)\Big)\,d2\,d3\,d4,
\end{equation}
where $1$, $2$, $3$, and $4$ stand for $p^\alpha$, $q^\alpha$, $p'^\alpha$, and 
$q'^\alpha$ respectively.
The left hand side of \eqref{3.1} means that the derivative of $f$ along the 
integral curves of the vector field
$X=(p^\alpha,-\Gamma^\alpha_{\beta\gamma}p^\beta p^\gamma)$
exists at almost all points of the domain of $f$.
The transition rate $W$ is given as
\begin{equation*}
W(12\to 34)=k\sigma(12\to 34),
\end{equation*}
where $k$ is some kinematical factor, and $\sigma$ can be regarded as the same 
quantity
as the scattering kernel defined in \eqref{2.1}.
The volume forms $d2$, $d3$, and $d4$ are same as in our case.
\[
d2=\frac{|g|^{\frac{1}{2}}}{-q_0}\,dq,
\]
and $d3$ and $d4$ are similarly defined.
The main theorem of \cite{B67} is the following.
\begin{theorem}{\rm\cite{B67}}
Let $f(0)$ be a measurable function on $\hat{\omega}_0$,
and suppose that there exists a continuous timelike vector field
$\beta^\alpha(x)$ on $\omega_0$ such that
\[
f(x,p)\leq C e^{\beta_\alpha(x)p^\alpha}\quad\mbox{{\rm on}}\quad \hat{\omega}_0,
\]
and the scattering kernel satisfies the following property.
\begin{equation}\label{3.2}
\iint \sigma(12\to 34)\,\delta(1+2-3-4)\,d3\,d4\leq {\rm const.}
\end{equation}
Then, there exists a solution $f$ to the Boltzmann equation \eqref{3.1} on a 
domain $\hat{\Omega}$,
which is again exponentially bounded.
Moreover, if $f(0)\geq 0$, then also $f\geq 0$.
\end{theorem}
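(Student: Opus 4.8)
The plan is to follow the strategy of \cite{B67}: recast \eqref{3.1} in Duhamel form along the characteristics of $X=(p^\alpha,-\Gamma^\alpha_{\beta\gamma}p^\beta p^\gamma)$ and solve it by an iteration that manifestly preserves both the sign of $f$ and the exponential bound. Parametrize an integral curve by $s=x^0$ as in \eqref{3.7}, so that $\mathcal{L}_Xf=P^0(s)\,\partial_s[f(s,X(s),P(s))]$, and split the collision operator as $Q=Q_+-Q_-$ with $Q_-(f,f)=f\,L(f)$, where $L(f)(x,p)=\iiint W(12\to 34)\,\delta(1+2-3-4)\,f(2)\,d2\,d3\,d4$. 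Multiplying by the integrating factor $E(s)=\exp\!\big(\int_0^s(P^0)^{-1}L(f)\,d\tau\big)$ converts the equation into
\[
\partial_s\Big[f(s,X(s),P(s))\,E(s)\Big]=\frac{1}{P^0(s)}\,Q_+(f,f)(s,X(s),P(s))\,E(s),
\]
and integrating from $0$ to $t$ represents $f(t,\cdot)$ as $f(0)$ times a positive factor plus the time integral of a non-negative quantity. Defining $f^{(n+1)}$ by this formula with $L(f^{(n)})$ in the exponent and $Q_+(f^{(n)},f^{(n)})$ on the right, and starting from $f^{(0)}=0$, one gets $f^{(n)}\ge 0$ for all $n$ as soon as $f(0)\ge 0$.

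The heart of the argument is the propagation of the exponential bound, and this is exactly where hypothesis \eqref{3.2} enters. On the support of $\delta(1+2-3-4)$ one has $p^\alpha+q^\alpha=p'^\alpha+q'^\alpha$, so the weight is multiplicative under collisions,
\[
e^{\beta_\alpha(x)p'^\alpha}\,e^{\beta_\alpha(x)q'^\alpha}=e^{\beta_\alpha(x)p^\alpha}\,e^{\beta_\alpha(x)q^\alpha},
\]
and therefore, from $f^{(n)}(x,p)\le C_n\,e^{\beta_\alpha(x)p^\alpha}$,
\[
Q_+(f^{(n)},f^{(n)})(x,p)\le C_n^{\,2}\,e^{\beta_\alpha(x)p^\alpha}\iiint W(12\to 34)\,\delta(1+2-3-4)\,e^{\beta_\alpha(x)q^\alpha}\,d2\,d3\,d4 .
\]
Since $\beta^\alpha$ is (future-directed) timelike, $\beta_\alpha q^\alpha\to-\infty$ as $|q|\to\infty$ on the mass shell, so after performing the $d3\,d4$ integration and bounding it by the constant in \eqref{3.2} the remaining integral in $q$ converges, with a bound uniform in $x$ over the region under consideration; the same estimate controls $L(f^{(n)})$. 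Inserting these into the Duhamel representation and applying Gr\"onwall's inequality yields $f^{(n+1)}(t,x,p)\le C_{n+1}\,e^{\beta_\alpha(x)p^\alpha}$ with $C_{n+1}$ depending only on $C_n$ and $t$; shrinking the existence domain $\hat{\Omega}$ makes the constants uniform in $n$, so the whole sequence stays in a fixed ball of the weighted space.

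The remaining steps are then routine. On a short enough time interval the map $f^{(n)}\mapsto f^{(n+1)}$ is a contraction of that ball in the weighted norm, since \eqref{3.2} makes both $Q_+$ and $L$ Lipschitz there; hence $f^{(n)}$ converges to a mild solution $f$ of \eqref{3.1} on $\hat{\Omega}$, still satisfying the exponential bound, and the same estimate gives uniqueness. Because each $f^{(n)}\ge 0$ and the convergence is pointwise almost everywhere (indeed along almost every characteristic), the limit inherits $f\ge 0$. I expect the exponential-bound step to be the main obstacle: the value of $\beta^\alpha$ at the foot of a characteristic differs from its value at the endpoint, so propagating the weight forces either a contraction of $\hat{\Omega}$ or a controlled deterioration of $\beta^\alpha$ and of the constant along the flow, and one must check that \eqref{3.2} is strong enough to absorb this, which it is, essentially because \eqref{3.2} is a cutoff-type condition that turns $Q_+$ and $L$ into bounded operators on the weighted $L^\infty$ space.
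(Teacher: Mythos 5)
This statement is Theorem 3.2 of the paper, which is quoted from Bichteler \cite{B67}; the paper gives no proof of it, so there is no in-paper proof to compare yours against. Your reconstruction --- Duhamel form along the characteristics of $X$, the splitting $Q=Q_+-Q_-$ with $Q_-=f\,L(f)$, the positive integrating factor $\exp\bigl(\int (P^0)^{-1}L(f)\bigr)$, iteration from $f^{(0)}=0$ so that non-negativity is manifest at every stage, and propagation of the exponential weight via $p'^\alpha+q'^\alpha=p^\alpha+q^\alpha$ together with the cutoff condition \eqref{3.2} --- is the standard argument and is consistent with the paper's brief remarks on how positivity is obtained in \cite{B67}. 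Two points in your sketch deserve more care than you give them: first, \eqref{3.2} bounds $\iint\sigma\,\delta\,d3\,d4$ while your estimate needs $\iint W\,\delta\,d3\,d4$ with $W=k\sigma$, so the kinematical factor $k$ (and the $1/P^0$ coming from parametrizing by $x^0$) must be absorbed; second, $\beta^\alpha$ is only hypothesized on $\omega_0$, so it must be extended to a timelike field on the evolution domain with $\beta_\alpha(X(s))P^\alpha(s)$ controlled along the flow --- you correctly identify this as the delicate step but do not resolve it. It is worth noting that the paper deliberately does \emph{not} take this route to non-negativity for the Einstein--Boltzmann system: it observes that the conditions of Lemma 2.5 do not imply \eqref{3.3} (the restatement of \eqref{3.2}), so Theorem 3.2 is inapplicable there, and it instead proves $f\ge 0$ for the solutions of Theorem 3.1 by perturbing the equation with a source $\eta e^{-|p|^2}$, running a first-crossing argument along particle paths for the perturbed solution, and letting $\eta\to 0$ using the $\mu-N$ regularity estimate. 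So your argument addresses the cited theorem itself, whereas the paper's own positivity proof is a different and, for its hypotheses on $S$, more appropriate mechanism.
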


The condition \eqref{3.2} can be stated in our notation as 
\begin{equation}\label{3.3}
\int_{S^2}\sigma(\varrho,\theta)\,d\Omega\leq {\rm const.}
\end{equation}
The corresponding conditions on the scattering kernel for the EB case are 
given in Lemma 2.5
with the relation \eqref{2.1} between the scattering kernel and the collision 
cross-section. Although the conditions in Lemma 2.5 are much more complicated 
than \eqref{3.3}, we can see that they do not imply \eqref{3.3}. Hence 
Theorem 3.2 cannot be applied directly to prove Theorem 3.1.

If there exists a solution of the Boltzmann equation with non-negative
initial data which becomes negative at some point in phase space then it is 
natural to follow the particle path through that point backwards in time.
This curve must meet the Cauchy surface and thus there is at least
one point on it where $f$ is non-negative. Hence there exists a point  
$(t_*,x_*,p_*)$ on the curve where it is zero and immediately afterwards 
negative. At this point the derivative of $f$ in the direction of the particle 
path is non-negative because the loss term vanishes. It might however be zero. 
This type of consideration plays a role in the proof of positivity in 
\cite{B67}

A small modification of this idea can be used to give a relatively simple
proof of non-negativity for the EB system. We first introduce the following 
notation,
\[
\mathcal{L}=\frac{\partial}{\partial t}+\frac{p^i}{p^0}
\frac{\partial}{\partial x^i}
-\Gamma^i_{\alpha\beta}\frac{p^\alpha p^\beta}{p^0}\frac{\partial}{\partial p^i},
\]
and write the Boltzmann equation as $\mathcal{L}f=(p^0)^{-1}Q(f,f)$.
We then modify the Boltzmann equation by adding a small quantity on the right 
hand side.
\begin{equation}
\mathcal{L}f_\eta=\frac{1}{p^0}Q(f_\eta,f_\eta)+\eta e^{-|p|^2},
\end{equation}
where $\eta>0$ is a small parameter.
The above equation is understood to be defined on $\hat{\Omega}$ with
a metric $g_{\alpha\beta}$ constructed as in Theorem 3.1.
Since the quantity $\eta e^{-|p|^2}$ is sufficiently smooth and square 
integrable on $\hat{\Omega}$,
existence of solutions is easily proved by the same argument of \cite{BCB73}
using the following lemma.
\begin{lemma}{\rm\cite{B73}}
Let $g_{\alpha\beta}$ and $\hat{\Omega}$ be the metric and the domain given in 
Theorem 3.1,
and consider the Cauchy problem for $\mathcal{L}f=g$ with $f(0)$ defined on 
$\hat{\omega}_0$.
Then, for any $0\leq s\leq t$, there exists a solution in 
$H_{\mu,N}(\hat{\Omega})$ satisfying the energy inequality,
\[
\|f(s)\|^2_{H_{\mu,N}(\hat{\omega}_s)}\leq C\left(
\|f(0)\|^2_{H_{\mu,N}(\hat{\omega}_0)}
+\int_0^t\|g(\tau)\|^2_{H_{\mu,N}(\hat{\omega}_\tau)}\,d\tau\right),
\]
where $C$ is a constant depending only on $g_{\alpha\beta}$, $\Omega$, and $\mu$.
\end{lemma}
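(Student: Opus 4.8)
The plan is to reduce the lemma to an a priori weighted energy estimate and then obtain existence by approximation. The operator $\mathcal{L}$ is a real first-order transport operator whose characteristics are exactly the particle paths $(X(s),P(s))$ of Section 3.1, i.e. the integral curves of the vector field $(1,p^i/p^0,-\Gamma^i_{\alpha\beta}p^\alpha p^\beta/p^0)$. For the metric produced by Theorem 3.1 one has $g_{\alpha\beta}\in H_{\mu+1}(\Omega)$ with $\mu\ge 5$, so by Sobolev embedding on $\Omega\subset\bbr^4$ the transport coefficients are of class $C^1$ in $x^\alpha$ and smooth in $p^i$ on compact sets; hence the flow is well defined, and the lens-shaped domain $\hat\Omega$ of \cite{BCB73} has the property (since $|p^i/p^0|<1$ while the lateral boundary recedes at speed $\ge 1$) that a characteristic through any point of $\hat\Omega$, traced backwards, stays in $\hat\Omega$ and meets $\hat\omega_0$. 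Consequently, for smooth data $f(0),g$ with compact momentum support the classical solution is $f(t,X(t),P(t))=f(0)+\int_0^t g(s,X(s),P(s))\,ds$, and it suffices to prove the energy inequality for such solutions with constants independent of the smoothing and then pass to the limit.

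To derive the inequality I would apply $D^k=D^{\bar k}_x D^{\hat k}_p$ with $|k|\le\mu$ to $\mathcal{L}f=g$ and commute, obtaining $\mathcal{L}(D^kf)=D^kg+[\mathcal{L},D^k]f$, multiply by $h_k^2(p)\,D^kf$, and integrate over the slice $\hat\omega_\tau$. Since $\mathcal{L}$ is a transport operator, $h_k^2 u\,\mathcal{L}u$ is a perfect divergence in $(t,x,p)$ minus $\tfrac12 u^2\big(\partial_t(h_k^2)+\partial_{x^i}(v^ih_k^2)+\partial_{p^i}(a^ih_k^2)\big)$ with $v^i=p^i/p^0$, $a^i=-\Gamma^i_{\alpha\beta}p^\alpha p^\beta/p^0$. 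Integrating in $\tau$, the divergence produces $\tfrac{d}{d\tau}\|h_kD^kf\|^2_{L^2(\hat\omega_\tau)}$, no contribution as $|p|\to\infty$ (the approximants have compact momentum support), and a flux through the lateral boundary of $\hat\Omega$ which, by the geometry just described, points outward and so enters with a favourable sign and may be dropped. The remaining term is $\le C\|h_kD^kf\|^2_{L^2(\hat\omega_\tau)}$: one uses that $\partial_{p^i}v^i$ and $\partial_{p^i}a^i$ are bounded (the coefficient $a^i$ is homogeneous of degree one in $p$) and that $|\mathcal{L}(p^0)|\le Cp^0$ — exactly as in \eqref{3.8}, via Lemma 2.1 and the boundedness of $\partial_\alpha g_{\beta\gamma}$ — so that differentiating the weight $h_k=(p^0)^{N/2+|\hat k|}$ along $\mathcal{L}$ costs only a bounded factor. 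Finally $\langle h_k^2 D^kf,\,D^kg+[\mathcal{L},D^k]f\rangle_{L^2(\hat\omega_\tau)}$ is handled by Cauchy--Schwarz together with standard Moser-type product and commutator estimates in Sobolev spaces of order $\mu\ge 5$ and the $g^{(m)}$-function bookkeeping of Section 2.2 applied to the transport part, bounding it by $C\sum_{|k'|\le|k|}\|h_{k'}D^{k'}f\|^2_{L^2(\hat\omega_\tau)}+\|D^kg\|^2_{L^2(\hat\omega_\tau)}$ with $C=C(g_{\alpha\beta},\Omega,\mu)$.

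Summing over $|k|\le\mu$ and writing $E(\tau)=\|f(\tau)\|^2_{H_{\mu,N}(\hat\omega_\tau)}$ gives $\tfrac{d}{d\tau}E(\tau)\le CE(\tau)+\|g(\tau)\|^2_{H_{\mu,N}(\hat\omega_\tau)}$, and Gr{\"o}nwall on $[0,t]$ yields the asserted inequality after absorbing $e^{Ct}$ into $C$. For existence, mollify $f(0)$ and $g$ to smooth functions with compact momentum support, solve $\mathcal{L}f^\varepsilon=g^\varepsilon$ by the flow above, note that the energy inequality holds for $f^\varepsilon$ uniformly in $\varepsilon$, extract a weak-$*$ limit $f\in L^\infty([0,t];H_{\mu,N})$, pass to the limit in the weak form of the equation to obtain $\mathcal{L}f=g$, and use weak lower semicontinuity of the norm to retain the estimate.

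The main obstacle is the second paragraph: controlling the weighted commutators, i.e. verifying that every term produced when $D^k_{x,p}$ meets the momentum-transport coefficient $-\Gamma^i_{\alpha\beta}p^\alpha p^\beta/p^0$, which grows linearly in $|p|$, is reabsorbed into the weighted norms with the correct powers of $p^0$; for the exponential weight $h=e^{p^0}$ there is the additional point that $\mathcal{L}(e^{p^0})$ picks up a factor $p^0$, so this case must be obtained through the limiting procedure $N\to\infty$ of \cite{B73,BCB73} with constants uniform in $N$. These are precisely the delicate points already present in \cite{B73,BCB73}; everything else is a routine linear transport energy estimate.
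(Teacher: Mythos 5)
The paper itself offers no proof of this lemma --- it is quoted from \cite{B73} --- so there is no internal argument to compare against; I can only assess your proposal on its merits. Your overall architecture (backward characteristics confined to the lens-shaped domain, a weighted $L^2$ energy identity for $D^kf$ with the lateral flux dropped by sign, commutator and Moser estimates closed by Gr\"onwall, existence by mollification and weak-$*$ compactness) is the standard one for a linear transport equation and is surely the skeleton of Bancel's argument; the $g^{(m)}$ bookkeeping for the commutators and the weight exponents does check out.

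The genuine gap is the treatment of the weights, which you flag honestly but then dispose of incorrectly. Since $\mathcal{L}(p^0)=-\Gamma^0_{\alpha\beta}p^\alpha p^\beta/p^0=O(p^0)$, the logarithmic derivative of the weight along the flow is $\mathcal{L}(\log h_k)=(N/2+|\hat k|)\,\mathcal{L}(p^0)/p^0$, which is bounded --- but by a constant proportional to $N$. Gr\"onwall then yields a factor $e^{CNt}$, so the constants are emphatically \emph{not} uniform in $N$, and the rescue you propose for the exponential weight (``the limiting procedure $N\to\infty$ with constants uniform in $N$'') cannot be carried out; note also that even for finite $N$ this gives a constant depending on $N$, whereas the lemma asserts dependence only on $g_{\alpha\beta}$, $\Omega$, $\mu$. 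Attacking $h=e^{p^0}$ directly is no better: $\mathcal{L}(h^2)/h^2=2\mathcal{L}(p^0)$ is unbounded in $p$, so the absorption step $\le C\|hD^kf\|^2_{L^2(\hat\omega_\tau)}$ fails outright. This is not an optional refinement: Section 3 of the paper invokes the lemma precisely in the case $N=\infty$, since the whole non-negativity argument of Section 3.2 is run in the exponentially weighted space. What is needed is an actual device --- for instance a time-dependent weight $e^{\lambda(\tau)p^0}$ with $\lambda'+C\lambda\le 0$ chosen so that the $\lambda'$ term absorbs the unbounded $\mathcal{L}(p^0)$ contribution (equivalently, using $P^0(\tau)\le e^{C\tau}P^0(0)$ and accepting a controlled loss in the exponential rate over the short time interval) --- and your second paragraph, as written, does not supply it.
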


As a result, we have two distribution functions $f$ and $f_\eta$ in a common 
domain $\hat{\Omega}$, and we can see that $f_\eta$ is non-negative by 
reasoning as above. For in the case of the modified equation the derivative
of $f$ along the particle path is strictly positive at the point 
$(t_*,x_*,p_*)$, a contradiction. Non-negativity of $f$ is now proved by 
showing continuous dependence on $\eta$ for $f_\eta$.\bigskip

\noindent{\bf Non-negativity of $f$.}
We subtract the modified equation from the original one and write $F=f-f_\eta$
to get
\[
\mathcal{L}F=\frac{1}{p^0}Q(f,f)-\frac{1}{p^0}Q(f_\eta,f_\eta)-\eta e^{-|p|^2}.
\]
By applying Lemma 3.1, we obtain the following integral inequality.
\[
\|F(t)\|^2_{H_{\mu,N}(\hat{\omega}_t)}\leq C\left(\eta+\int_0^t
\left\|\frac{1}{p^0}Q(f,f)(\tau)
-\frac{1}{p^0}Q(f_\eta,f_\eta)(\tau)\right\|^2_{H_{\mu,N}(\hat{\omega}_\tau)}
\,d\tau\right),
\]
where we used $F(0)=f(0)-f_\eta(0)\equiv 0$.
From the bilinearity of $Q$, the above difference of collision terms can be 
written 
\[
Q(f,f)-Q(f_\eta,f_\eta)=Q(f,F)+Q(F,f_\eta).
\]
We now apply Lemma 2.5 to obtain
\[
\|F(t)\|^2_{H_{\mu,N}(\hat{\omega}_t)}\leq C\left(\eta
+\int_0^t\|F(\tau)\|^2_{H_{\mu,N}(\hat{\omega}_\tau)}
\,d\tau\right),
\]
which implies that 
$\displaystyle\lim_{\eta\to 0}\|F(t)\|_{H_{\mu,N}(\hat{\omega}_t)}=0$, and consequently
$\displaystyle\lim_{\eta\to 0}\|F(t)\|_{H_{\mu,N}(\hat{\Omega})}=0$.
Since $\mu\geq 5$, we obtain
\[
\lim_{\eta\to 0}\|f-f_\eta\|_{L^\infty(\hat{\Omega})}=0,
\]
and this proves non-negativity of $f$.

The above argument actually proves a stronger statement. Consider a situation
where the initial data are not everywhere non-negative. If $(t,x,p)$ is a
point of phase space such the distribution function is non-negative at the 
point where the particle path through $(t,x,p)$ meets the Cauchy surface then 
we can argue as before to get a contradiction. Thus a statement is obtained 
about the non-negativity of $f$ at certain points of phase space.

\section{Further considerations}

An important concept in general relativity is that of general covariance.
In the context of mathematical relativity this means that it should be 
possible to express the Einstein-matter equations in a form which is 
invariant under diffeomorphisms. This assertion applies in particular to
the Einstein-Boltzmann system. The results of the previous sections are 
concerned with solutions of the EB system expressed in a local coordinate 
system. In other words they are results about the reduced Einstein equations 
in a harmonic coordinate system. These can be used to establish properties of 
the Einstein-Boltzmann sytem which are diffeomorphism invariant by 
standard procedures (cf. \cite{R}, Chapter 9). From this point of view
it is natural to prove a global positivity result for the EB system
and for this it is enough to show that a solution of the Boltzmann equation
on a globally hyperbolic spacetime with non-negative initial data is
non-negative. A problem in formulating a theorem of this kind is to 
identify a suitable class of collision cross-sections.  

The conditions on the collision cross-section used in \cite{BCB73} and 
in the previous sections are conditions on the function $S$ and, as such,
are coordinate dependent. For a global theorem it is necessary to formulate
an invariant condition and the function $S$ is not invariant since it
depends explicitly on the spacetime coordinates. A better alternative 
would be to formulate a condition on $\sigma$. In the formulation of 
the reduced EB system using a coordinate basis the function $\sigma$ 
still depends on the metric components. If instead the EB system is 
formulated in an orthonormal frame in the way explained in the introduction
this dependence on the metric is eliminated. This formulation also has
an advantage for the consideration of the question of what collision 
cross-sections are physically reasonable. When the orthonormal frame approach
is used the scattering kernel in general relativity is identical to that in
special relativity. Thus the problem of identifying physically reasonable
scattering kernels is reduced to the corresponding problem in special 
relativity which is better understood. 

What has just been said provides a strong motivation for considering 
the reduced system in the orthonormal frame formulation in more detail.
The method of proof of the theorems of \cite{BCB73} extends easily to 
this case. The inequalities on the derivatives $D^k_{x,p}S$ in Lemma 
\ref{Sbound} are replaced by the analogous estimates for $D^k_vS$
which are actually somewhat simpler. In fact there is one estimate which is 
necessary for those proofs which is not included in \cite{BCB73}. In the 
iteration used in that paper there is a new metric in each step of the 
iteration. In estimating the differences of iterates it is in particular 
necessary to estimate the change resulting from making a change of the metric 
in the collision term. This is not mentioned in \cite{BCB73}. The orthonormal 
frame formulation eliminates this problem since in that case the collision
term does not depend on the metric components. The conditions on $D^k_vS$
which are analogous to those of Lemma \ref{Sbound} are coordinate independent 
and are thus appropriate for formulating a global theorem. Let us call these the
orthonormal frame regularity conditions. From the point of view of
comparing with physical conditions on the collision term it would be of
interest to know what conditions on $\sigma$ are required to imply the 
orthonormal frame regularity conditions. For instance, as a simple case an
assumption on the support of $\varrho$ can be made. Let us consider the 
orthonormal frame regularity conditions in more detail. The conditions of
Lemma \ref{Sbound} are replaced as follows for the case $N=\infty$. Under the 
same conditions on $\mu$, $k$, and $r$, the following inequality should hold.
\[
\iint |\partial_{\varrho}^i{\hat\sigma}(\varrho,\theta)|^2
\varrho^{2i-4|k'|}e^{-2q^0}(q^0)^{14|r|+2|k|}(p^0)^{4|r|-1}\,d\Omega\,dq\leq C,
\]
where $i\leq |k'|\leq |k|$ and, motivated by the form of the expression
$S=\lambda\varrho\sigma$, the quantity ${\hat\sigma}=\varrho\sigma$ has
been introduced. Suppose now that $\sigma$ is a smooth function and
has support contained in the set defined by $m\leq\varrho\leq M$ 
for some positive numbers $m$ and $M$. Then, the term $\varrho^{2i-4|k'|}$ is 
clearly bounded. Moreover, the condition $\varrho\leq M$ gives the following 
estimate. Since we are working in an orthonormal frame, it implies that
\[
p^0\leq \frac{1}{q^0}(q\cdot p+M')
\leq \frac{1}{q^0}(|q||p|+M')\quad\mbox{for}\quad
M'=\frac{1}{2}M^2+1.
\]
After some calculations and using the inequality $|p|\leq p^0$ we obtain 
$|p|\leq 2M'q^0$, and this implies in turn that $p^0\leq Cq^0$ for some $C$.
Hence the orthonormal frame regularity conditions hold.

Consider a solution $f$ of the Boltzmann equation on a globally hyperbolic
spacetime and suppose that the scattering kernel satisifes the orthonormal
frame regularity conditions. If the initial data are non-negative then $f$ is 
non-negative. We suppose that the statement is false and obtain a 
contradiction. If the statement is false there is a point
$(t_1,x_1,p_1)$ with $f(t_1,x_1,p_1)<0$. Let $\gamma$ be the particle path
passing through $(t_1,x_1,p_1)$. If $(t,x,p)$ is a point of $\gamma$ 
sufficiently close to the initial hypersurface then local coordinates can be 
defined such that $(t,x)$ is contained in their domain of definition and the 
conditions of the theorem of \cite{BCB73} are satisfied. Hence 
$f(t,x,p)\ge 0$ close to the initial hypersurface. Let $(t_2,x_2,p_2)$ be a 
point on $\gamma$ with 
$f(t,x,p)\ge 0$ for all points on $\gamma$ with $t\le t_2$ and $f(t,x,p)<0$ 
for some points on $\gamma$ with $t$ arbitrarily close to $t_2$. We can choose 
a local Cauchy surface through the point $(t_2,x_2)$ and a local 
coordinate system on a neighbourhood of $(t_2,x_2)$ such that the conditions 
of the theorem of \cite{BCB73} are satisfied. It then follows by the result 
stated at the end of Section 3 that $f\ge 0$ at all points on $\gamma$ with 
$t$ slightly greater than $t_2$, a contradiction.   

It should be noted that as soon as results on well-posedness of the Cauchy
problem for the EB system can be extended to wider classes of scattering
kernels the positivity result can also be extended, provided the theorem
includes a statement about continuous dependence of the solution on the 
scattering kernel in a suitable sense.
For then it suffices to approximate a scattering kernel
of the new class by a sequence of kernels of the class previously treated.   

We hope that by clarifying a number of issues the results of this paper 
will contribute to the development of a mature theory of the local and
global Cauchy problem for the Einstein-Boltzmann system in the near future.

\end{document}